\newcommand{\nc}{\newcommand}
\nc{\rnc}{\renewcommand}
\nc{\nn}{\nonumber}
\nc{\der}{{\partial}}
\rnc{\Im}{{\rm{Im}\,}}
\rnc{\Re}{{\rm{Re}\,}}
\nc{\db}{\displaybreak[0]\\}
\nc{\bra}{\langle}
\nc{\ket}{\rangle}
\nc{\bs}{\boldsymbol}
\newtheorem{theorem}{Theorem}[section]
\newtheorem{lemma}[theorem]{Lemma}
\newtheorem{proposition}[theorem]{Proposition}
\theoremstyle{definition}
\newtheorem{definition}[theorem]{Definition}
\numberwithin{equation}{section}
\numberwithin{equation}{section}
\begin{document}%
%
\title{Elliptic supersymmetric integrable model \\
and multivariable elliptic functions}

\author{
Kohei Motegi \thanks{E-mail: kmoteg0@kaiyodai.ac.jp}
\\\\
{\it Faculty of Marine Technology, Tokyo University of Marine Science and Technology,}\\
 {\it Etchujima 2-1-6, Koto-Ku, Tokyo, 135-8533, Japan} \\
\\\\
\\
}

\date{\today}

\maketitle

\begin{abstract}
We investigate the elliptic integrable model
introduced by Deguchi and Martin, which is an elliptic extension
of the Perk-Schultz model.
We introduce and study a class of partition functions of the elliptic model
by using the Izergin-Korepin analysis.
We show that the partition functions are expressed as a product of
elliptic factors and elliptic Schur-type symmetric functions.
This result resembles the recent works by number theorists
in which the correspondence between the partition functions
of trigonometric models and
the product of the deformed Vandermonde determinant and
Schur functions were established.
\end{abstract}

\section{Introduction}
In statistical physics and field theory,
integrable models \cite{Bethe,FST,Baxter,KBI} play special roles
not only because many physical quantities can be computed exactly
but also because of its deep connections with mathematics.
One of the highlights is the discovery of quantum groups \cite{Dr,J}
through the investigations of the algebraic structure
of the $R$-matrix.
From the point of view of integrable models,
quantum groups are related with trigonometric models
whose matrix elements are given in terms of trigonometric functions.
There is also a class of elliptic integrable models whose matrix elements
are given in terms of elliptic functions.
The most famous one is the Baxter's eight-vertex model \cite{eight}.
Investigating the underlying algebraic structures,
several versions of
the elliptic quantum groups have been formulated and studied
\cite{Felder,FV,Konno1,Konno2,JKOS}.

In statistical physics, the most important objects are
partition functions.
Partition functions of integrable models are global objects
constructed from $R$-matrices.
By developing various methods such as the quantum inverse scattering method,
there are now extensive studies on the
partition functions of the elliptic eight-vertex solid-on-solid model
\cite{Ros,PRS,FK,YZ,Chinesegroup,Chinesegroup2,Galleasone,Galleastwo,Borodin},
ranging from the domain wall boundary partition functions to the wavefunctions.

The elliptic model we mentioned above is the eight-vertex solid-on-solid model
which is related with the Baxter's eight-vertex model by the vertex-face transform.
In this paper, we investigate another type of
elliptic integrable model, which was introduced by
Deguchi and Martin \cite{DM}.
Deguchi and Martin introduced this elliptic model
as an elliptic extension of the Perk-Schultz model \cite{PS}.
The Perk-Schultz model is a trigonometric model,
which was later found to have
the quantum superalgebra structure \cite{Yamane}.
Hence we sometimes call the Perk-Schultz model
as supersymmetric integrable model.

The Deguchi-Martin model was constructed as an elliptic generalization
of the face-type version of the Perk-Schultz model.
The quantum supergroup structure seems not to be explored till now.
For the case of the ordinary elliptic model,
the elliptic quantum group structure is understood to be obtained by
twisting the quantum group associated with the trigonometric models
\cite{JKOS}.
It might be possible to understand the underlying algebraic structure
of the Deguchi-Martin model as an elliptic quantum supergroup
obtained from a quasi-Hopf twist of the quantum supergroup
of the trigonometric Perk-Schultz model.

The main motivation for studying elliptic versions of supersymmetric models
in this paper comes from the previous works on the
trigonometric version of the model. In recent years,
trigonometric models were found to have
connections with the field of algebraic combinatorics.
Number theorists Bump-Brubaker-Friedberg \cite{BBF} found that
a class of partition functions of a free-fermion model in an external field
gives rise to an integrable model realization of the
Tokuyama formula for the Schur functions \cite{To}
(see also \cite{OkTo,HK1,HK2} for pioneering works on variations
of the Tokuyama formula).
Tokuyama formula is a one-parameter deformation of the Weyl character formula
for the Schur functions.
This fundamental result lead people to
find generalizations and variations of the
Tokuyama-type formula for various types of symmetric functions
\cite{Iv,BBCG,Tabony,BMN,HK,BS,BBB,LMP,dualsymplectic}.
The latest topic is the introduction of the notion of
the metaplectic ice, which is explicitly constructed in \cite{BBB}
by twisting the higher rank Perk-Schultz model.

As for the domain wall boundary partition function,
the celebrated Izergin-Korepin analysis \cite{Ko,Iz}
was performed, and factorization phenomena on
the domain wall boundary partition functions
was found for the Perk-Schultz model and the
closely related Felderhof free-fermion model
\cite{Felderhof} in \cite{ZZ,FCWZ}
(see also \cite{ZYZ} for an application to correlation functions).
This factorization formula for the domain wall boundary partition functions
cannot be observed in general
for the case of the $U_q(sl_2)$ six-vertex models.
Instead, a determinant formula called as the Izergin-Korepin determinant
was already obtained in 1980s \cite{Ko,Iz}.
The factorization phenomena for the Perk-Schultz and Felderhof models
was extended to the elliptic models \cite{FWZ,Zuparic}.
For example, an extensive study on the domain wall boundary partition
functions of the Deguchi-Martin model was performed
in the PhD thesis of Zuparic \cite{Zuparic}.

In this paper, focusing on the Deguchi-Martin model,
we introduce and investigate the explicit forms of
a more general class of partition functions which
includes the domain wall boundary partition functions as a special case.
We apply the Izergin-Korepin technique to study the partition functions
which was recently applied to the $U_q(sl_2)$ six-vertex model in \cite{Motr}.

By using the Izergin-Korepin technique,
We prove that the partition functions are expressed
as a certain multivariable elliptic functions.
Let us briefly present this main result of this paper in some more detail.
We denote the partition functions of the Deguchi-Martin model treated in this
paper as
$W_{M,N}(u_1,\dots,u_N|v_1,\dots,v_M|x_1,\dots,x_N|a_{12})$.
The precise definition will be given in Definition \ref{definitionofwavefunctions}
in section 2. The partition functions are defined using the local elliptic
weights of the Deguchi-Martin model,
and are determined by the boundary conditions.
Since the local weights of the Deguchi-Martin model
will be described using spectral parameters and the state vectors,
the partition functions depend on two classes of complex parameters
$u_1,\dots,u_N$, $v_1,\dots,v_M$ coming from the spectral parameters,
and a complex parameter $a_{12}$ coming from the state vectors.
The partition functions we investigate in this paper also
have dependence on boundary conditions, which are labelled using a sequence of
integers $x_1,\dots,x_N$ satisfying
$1 \le x_1 < \cdots < x_N \le M$.
The dependence on these parameters are why we denote the partition functions as
$W_{M,N}(u_1,\dots,u_N|v_1,\dots,v_M|x_1,\dots,x_N|a_{12})$.

The main result of this paper is Theorem \ref{maintheoremstatement}
in section 4, which states that
the partition functions is expressed as a product of
elliptic factors and an elliptic version of the Schur functions as
\begin{align}
&W_{M,N}(u_1,\dots,u_N|v_1,\dots,v_M|x_1,\dots,x_N|a_{12})
\nonumber \\
=&\prod_{1 \le j < k \le N} \frac{[1+u_k-u_j]}{[1]}
\prod_{k=1}^N \prod_{x_{k+1}^M-2 \ge j \ge x_k^M}
\frac{[a_{12}+j+N]}{[a_{12}+j+N-k]} \nonumber \\
&\times \sum_{\sigma \in S_N}
\prod_{1 \le j < k \le N} \frac{[1]}{[u_{\sigma(j)}-u_{\sigma(k)}]}
\prod_{j=1}^N \prod_{k=x_j+1}^M \frac{[u_{\sigma(j)}-v_k]}{[1]}
\nonumber \\
&\times
\prod_{j=1}^N \frac{[-u_{\sigma(j)}+v_{x_j}+a_{12}+x_j+N-2]}{[a_{12}+x_j+N-2]}
\prod_{j=1}^N \prod_{k=1}^{x_j-1}
\frac{[1+u_{\sigma(j)}-v_k]}{[1]}.
\end{align}
Here, $[u]=H(\lambda u)$ where
\begin{align}
H(u)=2q^{\frac{1}{4}} \mathrm{sin} \Bigg( \frac{\pi u}{2 K_1} \Bigg)
\prod_{n=1}^\infty \Bigg\{ 1-2q^{2n} \mathrm{cos} \Bigg( \frac{\pi u}{K_1} \Bigg)+q^{4n} \Bigg\} \{1-q^{2n} \},
\end{align}
is an elliptic theta function, and
$x_k^M$, $k=1,\dots,N+1$ is defined as $x_{N+1}^M=M+1$
and $x_k^M=x_k$, $k=1,\dots,N$.

The above result on the correspondence between the partition functions
and elliptic multivariable functions
resembles the one for the trigonometric model,
whose partition functions
were found to be given as the product of a one-parameter deformation
of the Vandermonde determinant and the (factorial)
Schur functions \cite{BBF,BMN},
hence can be viewed as an elliptic analogue of the result
for the trigonometric model.

This paper is organized as follows.
In the next section, we first review the properties
of elliptic functions, and introduce the Deguchi-Martin model
and a class of partition functions of the model.
We make the Izergin-Korepin analysis in section 3
and list the properties needed to uniquely determine the explicit form
of the partition functions.
In section 4, we present the main theorem of this paper,
the explicit expression of the partition functions
as a certain product of
elliptic factors and elliptic symmetric functions.
In section 5, we prove the main theorem by showing that the elliptic functions
satisfies all the required properties in the
Izergin-Korepin analysis which uniquely defines the partition functions.
Section 6 is devoted to the conclusion of this paper.

\section{Elliptic functions and Deguchi-Martin model}
In this section,
we first review the properties of elliptic functions.
Next, we introduce the Deguchi-Martin model,
following Deguchi-Martin \cite{DM}
and the PhD thesis of Zuparic \cite{Zuparic}.

\subsection{Elliptic functions}
We first introduce elliptic functions and list the properties
which we use in this paper.
The half period magnitudes $K_1$ and $K_2$ are defined for 
elliptic nome $q$ $(0 < q < 1)$ as
\begin{align}
K_1&=\frac{1}{2}\pi \prod_{n=1}^\infty
\Bigg( \Bigg\{ \frac{1+q^{2n-1}}{1-q^{2n-1}} \Bigg\}
\Bigg\{ \frac{1-q^{2n}}{1+q^{2n}} \Bigg\} \Bigg)^2, \\
K_2&=-\frac{1}{\pi}K_1 \mathrm{log}(q).
\end{align}
The theta functions $H(u)$ is defined using $K_1$, $K_2$ and $q$ as
\begin{align}
H(u)=2q^{\frac{1}{4}} \mathrm{sin} \Bigg( \frac{\pi u}{2 K_1} \Bigg)
\prod_{n=1}^\infty \Bigg\{ 1-2q^{2n} \mathrm{cos} \Bigg( \frac{\pi u}{K_1} \Bigg)+q^{4n} \Bigg\} \{1-q^{2n} \}.
\end{align}
The important properties of the theta functions are the following quasi-periodicities
\begin{align}
H(u+2mK_1)&=(-1)^m H(u), \\
H(u+2inK_2)&=(-1)^n q^{-n^2}
\mathrm{exp} \Bigg(-\frac{in\pi u}{K_1} \Bigg) H(u),
\end{align}
for integers $m$ and $n$.
Note also that $H(u)$ is an odd function $H(-u)=-H(u)$.

In their work of the analysis on the domain wall boundary partition functions
of elliptic integrable models,
Pakuliak-Rubtsov-Silantyev \cite{PRS} (see also Felder-Schorr \cite{FS})
used the following notions and properties of elliptic polynomials.

A character is a group homomorphism
$\chi$ from multiplicative groups
$\Gamma=\mathbf{Z}+\tau \mathbf{Z}$ to $\mathbf{C}^\times$.
An $N$-dimensional space $\Theta_N(\chi)$
is defined for each character $\chi$ and positive integer $N$,
which consists of holomorphic functions $\phi(y)$ on $\mathbf{C}$
satisfying the quasi-periodicities
\begin{align}
\phi(y+1)&=\chi(1) \phi(y), \label{propertyuseone} \\
\phi(y+\tau)&=\chi(\tau) e^{-2 \pi i Ny-\pi i N \tau}\phi(y).
\label{propertyusetwo}
\end{align}
The elements of the space $\Theta_N(\chi)$ are called elliptic polynomials.
The space $\Theta_N(\chi)$ is $N$-dimensional,
and we use the following property for the elliptic polynomials.
\begin{proposition} \cite{PRS,FS} \label{propositionelliptic}
Suppose there are two elliptic polynomials $P(y)$ and $Q(y)$
in $\Theta_N(\chi)$, where $\chi(1)=(-1)^N$, $\chi(\tau)=(-1)^N e^\alpha$.
If those two polynomials are equal $P(y_j)=Q(y_j)$
at $N$ points $y_j$, $j=1,\dots,N$ satisfying
$y_j-y_k \not\in \Gamma$, $\sum_{k=1}^N y_k-\alpha \not\in \Gamma$,
then the two polynomials are exactly the same $P(y)=Q(y)$.
\end{proposition}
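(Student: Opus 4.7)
My plan is to reduce the claim to the classical theta-function fact that a non-zero element of $\Theta_N(\chi)$ has exactly $N$ zeros in a fundamental parallelogram whose sum modulo $\Gamma$ is pinned down by the character $\chi$. Setting $R(y) := P(y) - Q(y)$, linearity of the quasi-periodicities (\ref{propertyuseone})--(\ref{propertyusetwo}) shows $R \in \Theta_N(\chi)$, and by hypothesis $R(y_j) = 0$ for $j = 1,\dots,N$; it therefore suffices to prove $R \equiv 0$.

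Suppose for contradiction that $R \not\equiv 0$. To count its zeros in the fundamental parallelogram with vertices $0$, $1$, $1+\tau$, $\tau$ I would evaluate
\begin{equation*}
\frac{1}{2\pi i}\oint \frac{R'(y)}{R(y)}\,dy.
\end{equation*}
The quasi-periodicity (\ref{propertyuseone}) makes $R'/R$ invariant under $y\mapsto y+1$, so the two slanted edges cancel, while (\ref{propertyusetwo}) shifts $R'/R$ by the constant $-2\pi i N$ under $y\mapsto y+\tau$, so the two horizontal edges contribute $N$ in total. Hence $R$ has exactly $N$ zeros (counted with multiplicity) in the fundamental domain. Since the given $y_j$ are pairwise distinct modulo $\Gamma$ and all are zeros of $R$, they must constitute \emph{all} the zeros of $R$ in the fundamental domain.

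The final step is to derive a constraint on the sum of these zeros. Applying the same contour-integration technique to
\begin{equation*}
\frac{1}{2\pi i}\oint y\,\frac{R'(y)}{R(y)}\,dy,
\end{equation*}
the horizontal pair of edges produces a term depending on $\chi(1)=(-1)^N$ and the slanted pair a term depending on $\chi(\tau)=(-1)^N e^\alpha$; collecting them is the standard Abel/Jacobi-type argument for theta functions and yields the identity $\sum_{k=1}^N y_k \equiv \alpha \pmod{\Gamma}$ in the normalization adopted in \cite{PRS,FS}. This directly contradicts the hypothesis $\sum_{k=1}^N y_k - \alpha \notin \Gamma$, so $R \equiv 0$ and hence $P = Q$.

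The main obstacle I anticipate is not structural but clerical: one must carefully track branches of logarithm and constants of $2\pi i$ in the weighted contour integral so that the Abel/Jacobi constraint emerges in exactly the normalization of $\alpha$ used by the authors. A cleaner alternative would be to exhibit an explicit basis of $\Theta_N(\chi)$ (for instance, $N$ translates of a single theta function whose zero sum is shifted to $\alpha$) and recast uniqueness as the non-vanishing of a theta-Vandermonde determinant on the $N$ configuration points, whose vanishing locus is precisely $\{y_j - y_k \in \Gamma\} \cup \{\sum y_k - \alpha \in \Gamma\}$---the exceptional set excluded by the hypothesis.
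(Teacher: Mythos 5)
The paper never proves this proposition: it is quoted from \cite{PRS,FS} and used as a black box in the Izergin--Korepin analysis, so there is no in-paper argument to compare yours against. Judged on its own, your proof is the standard one and is structurally sound: $R=P-Q$ lies in the linear space $\Theta_N(\chi)$; the unweighted argument-principle integral gives exactly $N$ zeros per fundamental cell (your bookkeeping of which pair of edges cancels by $1$-periodicity of $R'/R$ and which pair contributes $N$ via the shift $R'/R\mapsto R'/R-2\pi iN$ is correct); the hypothesis $y_j-y_k\notin\Gamma$ then forces the $y_j$ to exhaust the zeros; and the weighted integral $\frac{1}{2\pi i}\oint y\,R'(y)/R(y)\,dy$ pins the zero-sum modulo $\Gamma$, contradicting the last hypothesis. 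Two points you should nail down rather than defer. First, the contour must be translated so that no zero of $R$ lies on it (standard, but say it). Second, the normalization issue you flag is real and does not come out quite as you assert: carrying out the weighted integral with the multipliers exactly as in \eqref{propertyuseone}--\eqref{propertyusetwo} and $\chi(\tau)=(-1)^N e^{\alpha}$ yields $\sum_k y_k\equiv \alpha/(2\pi i)\pmod{\Gamma}$, not $\sum_k y_k\equiv\alpha$; so either $\alpha$ is implicitly normalized with a factor of $2\pi i$ (as it effectively is in \cite{PRS}) or the non-degeneracy condition should be read as $\sum_k y_k-\alpha/(2\pi i)\notin\Gamma$. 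This does not damage the logic --- the exceptional locus is in any case a single $\Gamma$-coset determined by $\chi(\tau)$, which is all the application needs --- but as written your final congruence is asserted rather than derived, and it is the one place where a constant actually matters. Your alternative route (an explicit basis of translated theta functions and a theta-Vandermonde determinant whose vanishing locus is exactly the excluded set) is also viable and is closer to how \cite{PRS} organizes the interpolation argument.
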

The above proposition is an elliptic version of the fact that
if $P(y)$ and $Q(y)$ are polynomials of degree $N-1$ in $y$,
and if these polynomials match at $N$ distinct points,
then the two polynomials are exactly the same.
This property was used for the analysis on the
domain wall boundary partition functions of the
eight-vertex solid-on-solid model \cite{ABF},
and we will also use this for the Izergin-Korepin anlaysis on
the partition functions of the elliptic Deguchi-Martin model.

\subsection{Deguchi-Martin model and Partition functions}
We introduce the Deguchi-Martin (elliptic supersymmetric integrable)
model.
Deguchi-Martin model is a class of face model which is an
elliptic generalization of the trigonometric Perk-Schultz vertex model.
The vertex and face models both have the
spectral parameters and inhomogeneous parameters.
The face models have additional state variables
coming from the state vectors.
The state variables are functions of the state vectors of the face model.
For the case of the Deguchi-Martin model treated in this paper,
the state vectors are elements of $\mathbf{Z}^2$.
Let us introduce two unit vectors
$\hat{e}_1=\{1,0 \}$ and $\hat{e}_2=\{0,1\}$ which form a basis
of $\mathbf{Z}^2$.
We also introduce an arbitrary complex variable $\omega_{12}$,
and define $\omega_{21}$ as $\omega_{12}=-\omega_{21}$.
We also set $\omega_{11},\omega_{22}$,$\epsilon_1$,$\epsilon_2$
as
$\omega_{11}=\omega_{22}=0$, $\epsilon_1=1$, $\epsilon_2=-1$
which we introduce for the purpose of defining the weights of the model.

Each face is labelled by the four state vectors
associated with the four vertices around the face.
To each face where the surrounding four state vectors are fixed,
we assign a weight which depends on
the spectral parameters and inhomogeneous parameters.
One can think that the spectral and inhomogeneous parameters
are carried by lines penetrating the vertical and horizontal edges.
We denote the weights associated with the state vectors
$\overrightarrow{a}, \overrightarrow{b},
\overrightarrow{c}, \overrightarrow{d}$,
spectral parameter $u$ and inhomogeneous parameter $v$
by (Figure \ref{pictureface})
\begin{align}
W \Bigg(
 \begin{array}{cc}
\overrightarrow{a} & \overrightarrow{b} \\
\overrightarrow{c} & \overrightarrow{d}
 \end{array} 
\Bigg|
u
\Bigg|
v
\Bigg). \label{notationweight}
\end{align}
To define the weights of the Deguchi-Martin model, one introduces the following notation
for convenience
\begin{align}
[u]=H(\lambda u),
\end{align}
for a fixed complex variable $\lambda$.
In this notation, the quasi-periodicities are expressed as
\begin{align}
\Bigg[ u+\frac{2K_1}{\lambda} \Bigg]&=-[u], \label{qpuseone} \\
\Bigg[ u+2i\frac{K_2}{\lambda} \Bigg]
&=-q^{-1} \mathrm{exp}
\Bigg(
-\frac{i \pi \lambda u}{K_1}
\Bigg)
[u]. \label{qpusetwo}
\end{align}

\begin{figure}[ht]
\includegraphics[width=12cm]{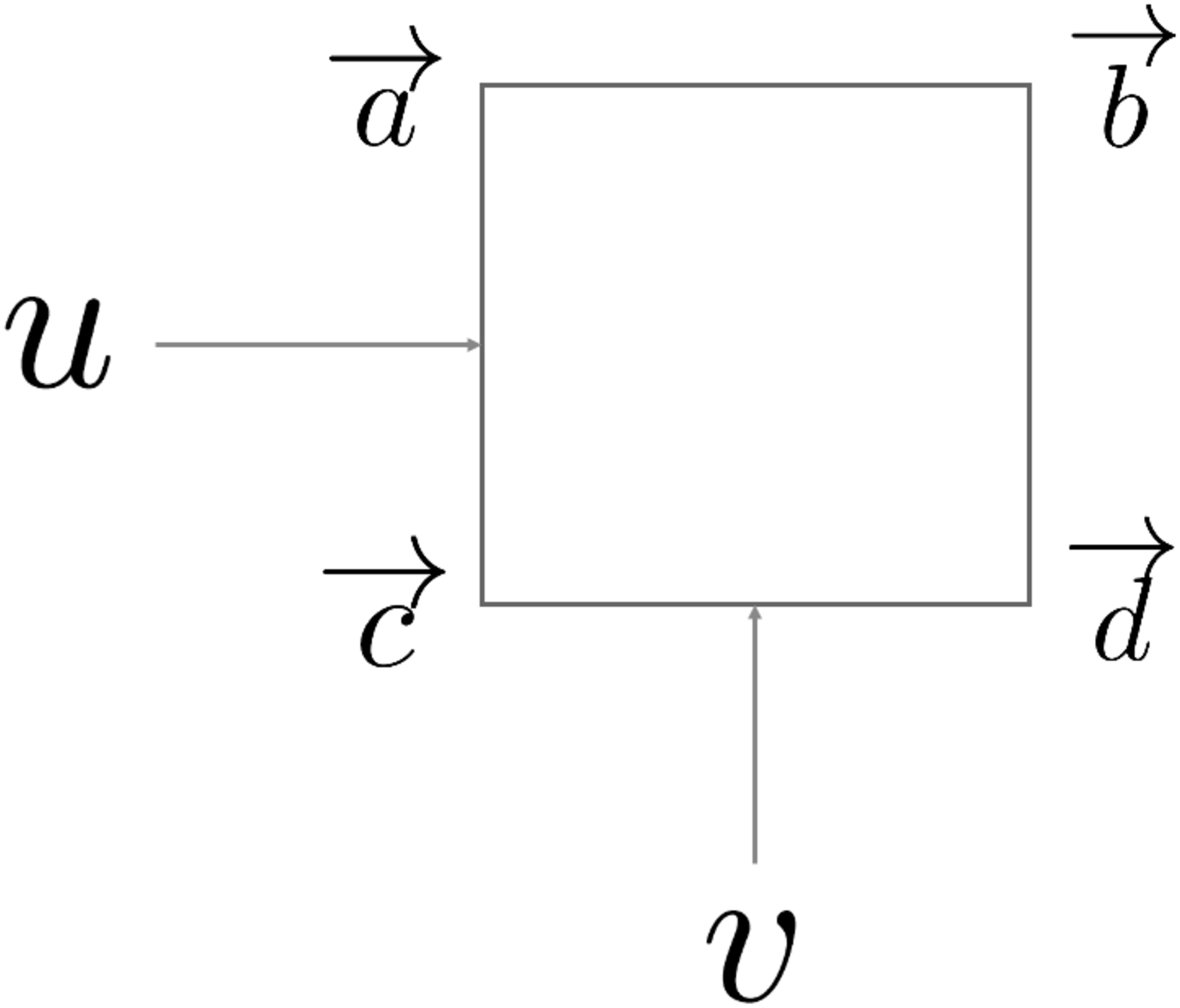}
\caption{A graphical description of a face of the Deguchi-Martin model.
The state vectors
are associated at four vertices,
and a spectral parameter $u$
and an inhomogenous parameter $v$ can be associated
with integrable lattice models.
To each face where the surrounding four state vectors
are fixed, we associate a weight
which we denote by \eqref{notationweight}.
}
\label{pictureface}
\end{figure}

The weights for the Deguchi-Martin model is defined as
(Figure \ref{pictureloperator})
\begin{align}
W \Bigg(
 \begin{array}{cc}
\overrightarrow{a} & \overrightarrow{a}+\hat{e}_j \\
\overrightarrow{a}+\hat{e}_j & \overrightarrow{a}+2\hat{e}_j
 \end{array} 
\Bigg|
u
\Bigg|
v
\Bigg)
=&\frac{[1+\epsilon_j(u-v)]}{[1]}, \ \ \ j=1,2, \label{weightsone} \\
W \Bigg(
 \begin{array}{cc}
\overrightarrow{a} & \overrightarrow{a}+\hat{e}_k \\
\overrightarrow{a}+\hat{e}_j & \overrightarrow{a}+\hat{e}_j+\hat{e}_k
 \end{array} 
\Bigg|
u
\Bigg|
v
\Bigg)
=&\frac{[u-v][a_{jk}-1]}{[1][a_{jk}]}, \ \ \ (j,k)=(1,2), (2,1), 
\label{weightstwo}
\\
W \Bigg(
 \begin{array}{cc}
\overrightarrow{a} & \overrightarrow{a}+\hat{e}_j \\
\overrightarrow{a}+\hat{e}_j & \overrightarrow{a}+\hat{e}_j+\hat{e}_k
 \end{array} 
\Bigg|
u
\Bigg|
v
\Bigg)
=&\frac{[a_{jk}-(u-v)]}{[a_{jk}]}, \ \ \ (j,k)=(1,2), (2,1),
\label{weightsthree}
\end{align}
when the state vector on the top left corner vertex
is $\overrightarrow{a} \in \mathbf{Z}^2$.
Here, $a_{jk}$ is a function of $\overrightarrow{a}=(a_1,a_2)$
defined as
$a_{jk}=\epsilon_j a_j-\epsilon_k a_k+\omega_{jk}$.
Since we are dealing with the $m=n=1$ case of the
$gl(m|n)$ Deguchi-Martin model,
the $a_{jk}$s appearing in this paper are $a_{12}=a_1+a_2+\omega_{12}$
and $a_{21}=-a_1-a_2+\omega_{21}$.
Moreover, since $\omega_{12}=-\omega_{21}$, we have the constraint
$a_{12}=-a_{21}$.

\begin{figure}[ht]
\includegraphics[width=15cm]{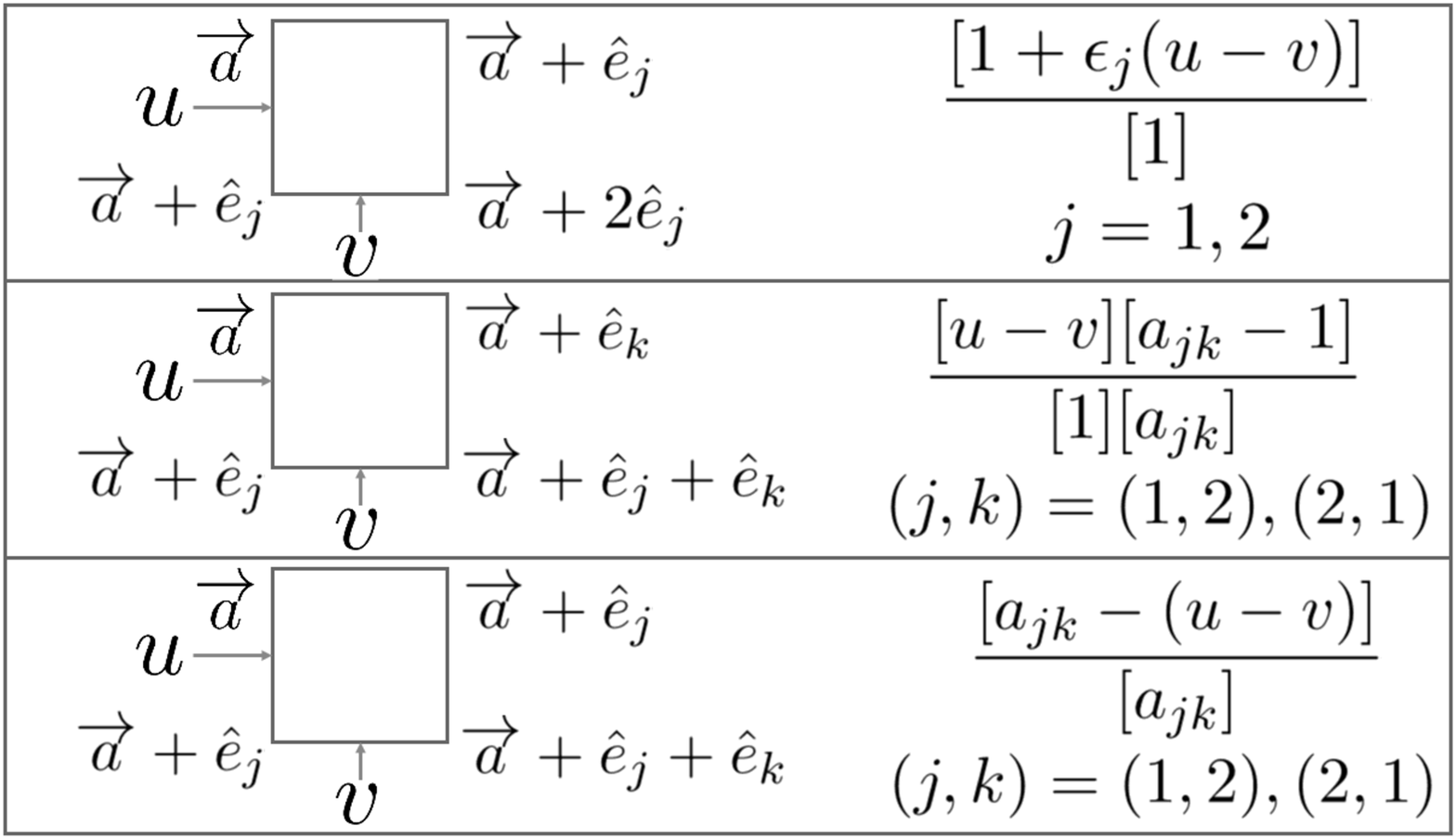}
\caption{The weights for the Deguchi-Martin model
\eqref{weightsone},
\eqref{weightstwo},
\eqref{weightsthree}
associated with three types of the configurations
of the state vectors.
}
\label{pictureloperator}
\end{figure}

\begin{figure}[ht]
\includegraphics[width=15cm]{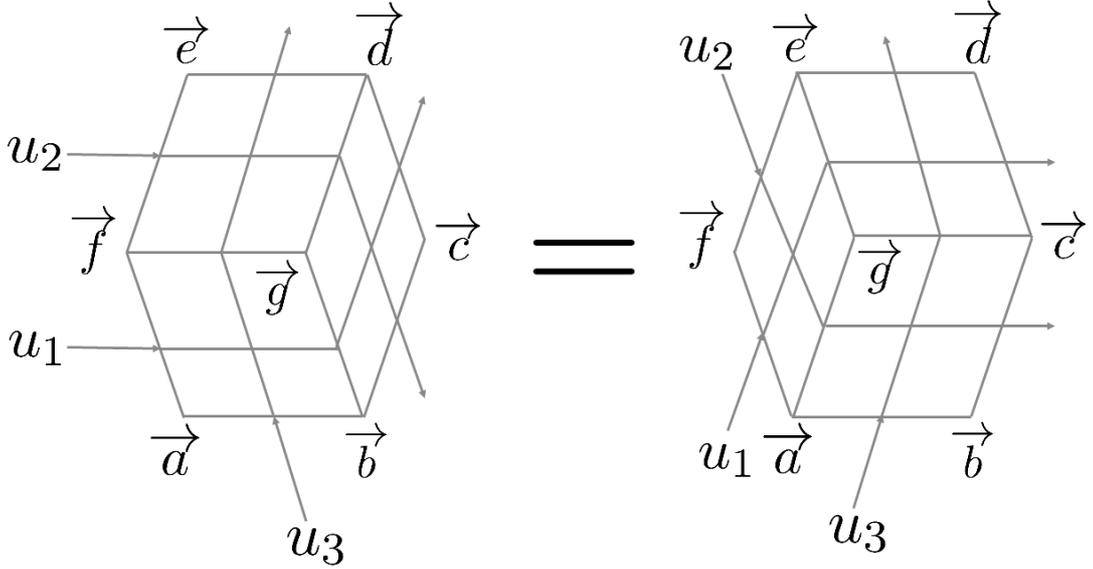}
\caption{The face-type Yang-Baxter relation (star-triangle relation) \eqref{yangbaxter}.
We take the sum over the inner state vectors.
}
\label{pictureyangbaxter}
\end{figure}

All the other weights
$\displaystyle W \Bigg(
 \begin{array}{cc}
\overrightarrow{a} & \overrightarrow{b} \\
\overrightarrow{c} & \overrightarrow{d}
 \end{array} 
\Bigg|
u
\Bigg|
v
\Bigg)$
where the tuple of the state vectors
($\overrightarrow{a}, \overrightarrow{b},
\overrightarrow{c}, \overrightarrow{d}$)
cannot be written in the form of
\eqref{weightsone},
\eqref{weightstwo},
\eqref{weightsthree} are defined to be zero.

The weights
\eqref{weightsone},
\eqref{weightstwo},
\eqref{weightsthree}
satisfy the face-type Yang-Baxter relation
(star-triangle relation) (Figure \ref{pictureyangbaxter})

\begin{align}
&
\sum_{\overrightarrow{g} \in \mathbf{Z}^2}
W \Bigg(
 \begin{array}{cc}
\overrightarrow{f} & \overrightarrow{g} \\
\overrightarrow{a} & \overrightarrow{b}
 \end{array} 
\Bigg|
u_1
\Bigg|
u_3
\Bigg)
W \Bigg(
 \begin{array}{cc}
\overrightarrow{e} & \overrightarrow{d} \\
\overrightarrow{f} & \overrightarrow{g}
 \end{array} 
\Bigg|
u_2
\Bigg|
u_3
\Bigg)
W \Bigg(
 \begin{array}{cc}
\overrightarrow{d} & \overrightarrow{c} \\
\overrightarrow{g} & \overrightarrow{b}
 \end{array} 
\Bigg|
u_2
\Bigg|
u_1
\Bigg)
\nonumber \\
=&
\sum_{\overrightarrow{g} \in \mathbf{Z}^2}
W \Bigg(
 \begin{array}{cc}
\overrightarrow{e} & \overrightarrow{d} \\
\overrightarrow{g} & \overrightarrow{c}
 \end{array} 
\Bigg|
u_1
\Bigg|
u_3
\Bigg)
W \Bigg(
 \begin{array}{cc}
\overrightarrow{g} & \overrightarrow{c} \\
\overrightarrow{a} & \overrightarrow{b}
 \end{array} 
\Bigg|
u_2
\Bigg|
u_3
\Bigg)
W \Bigg(
 \begin{array}{cc}
\overrightarrow{e} & \overrightarrow{g} \\
\overrightarrow{f} & \overrightarrow{a}
 \end{array} 
\Bigg|
u_2
\Bigg|
u_1
\Bigg). \label{yangbaxter}
\end{align}

One next introduces
the product of weights (Figure \ref{pictureprodcutofweights})
as
\begin{align}
&T_{M}^j \Bigg(
 \begin{array}{cccc}
\overrightarrow{a_{0}^j} & \overrightarrow{a_1^j} & \cdots &
\overrightarrow{a_M^j} \\
\overrightarrow{b_{0}^j} & \overrightarrow{b_1^j} & \cdots &
\overrightarrow{b_M^j}
 \end{array} 
\Bigg|
u
\Bigg|
v_1,\dots,v_M
\Bigg) \nonumber \\
=&
W \Bigg(
 \begin{array}{cc}
\overrightarrow{a_0^j} & \overrightarrow{a_1^j} \\
\overrightarrow{b_0^j} & \overrightarrow{b_1^j}
 \end{array} 
\Bigg|
u
\Bigg|
v_1
\Bigg)
W \Bigg(
 \begin{array}{cc}
\overrightarrow{a_1^j} & \overrightarrow{a_2^j} \\
\overrightarrow{b_1^j} & \overrightarrow{b_2^j}
 \end{array} 
\Bigg|
u
\Bigg|
v_2
\Bigg)
\cdots
W \Bigg(
 \begin{array}{cc}
\overrightarrow{a_{M-1}^j} & \overrightarrow{a_M^j} \\
\overrightarrow{b_{M-1}^j} & \overrightarrow{b_M^j}
 \end{array} 
\Bigg|
u
\Bigg|
v_M
\Bigg). \label{productweights}
\end{align}

We are now in a position to define the partition functions
of the Deguchi-Martin model which we investigate in this paper.

\begin{definition} \label{definitionofwavefunctions}

We define the partition functions of the elliptic supersymmetric
Deguchi-Martin model using the product of weights
\eqref{productweights} as
\begin{align}
&W_{M,N}(u_1,\dots,u_N|v_1,\dots,v_M|x_1,\dots,x_N|a_{12})
=\sum_{\{ \overrightarrow{b^0} \},\{ \overrightarrow{b^1} \},\dots,\{ \overrightarrow{b^{N-2}} \}} \nonumber \\
&T_{M}^0 \Bigg(
 \begin{array}{ccccc}
\overrightarrow{a_{12}} & \overrightarrow{a_{12}}+\hat{e}_1 
& \cdots &
\overrightarrow{a_{12}}+(M-1)\hat{e}_1 
&
\overrightarrow{a_{12}}+M\hat{e}_1 \\
\overrightarrow{a_{12}}+\hat{e}_1 & \overrightarrow{b_1^0}
& \cdots &
\overrightarrow{b_{M-1}^0} 
&
\overrightarrow{a_{12}}+M\hat{e}_1+\hat{e}_2
 \end{array} 
\Bigg|
u_N
\Bigg|
v_1,\dots,v_M
\Bigg) \nonumber \\
\times \prod_{j=1}^{N-2}
&T_{M}^j \Bigg(
 \begin{array}{ccccc}
\overrightarrow{a_{12}}+j\hat{e}_1 & \overrightarrow{b_{1}^{j-1}}
& \cdots &
\overrightarrow{b_{M-1}^{j-1}}
&
\overrightarrow{a_{12}}+M\hat{e}_1+j\hat{e}_2 \\
\overrightarrow{a_{12}}+(j+1)\hat{e}_1 & \overrightarrow{b_1^j}
& \cdots &
\overrightarrow{b_{M-1}^j}
&
\overrightarrow{a_{12}}+M\hat{e}_1+(j+1)\hat{e}_2
 \end{array} 
\Bigg|
u_{N-j}
\Bigg|
v_1,\dots,v_M
\Bigg) \nonumber \\
\times
&T_{M}^{N-1} \Bigg(
 \begin{array}{ccccc}
\overrightarrow{a_{12}}+(N-1)\hat{e}_1 & \overrightarrow{b_{1}^{N-2}}
& \cdots &
\overrightarrow{b_{M-1}^{N-2}}
&
\overrightarrow{a_{12}}+M\hat{e}_1+(N-1)\hat{e}_2 \\
\overrightarrow{c_{0}^{N-1}} & \overrightarrow{c_{1}^{N-1}}
& \cdots &
\overrightarrow{c_{M-1}^{N-1}}
&
\overrightarrow{c_M^{N-1}}
 \end{array} 
\Bigg|
u_1
\Bigg|
v_1,\dots,v_M
\Bigg), \label{wavefunction}
\end{align}
where the state vectors
$\overrightarrow{c_0^{N-1}},\overrightarrow{c_1^{N-1}},\dots,
\overrightarrow{c_{M-1}^{N-1}},\overrightarrow{c_{M}^{N-1}}$
are fixed by the sequence of integers $1 \le x_1 < x_2 < \cdots < x_N \le M$
as
\begin{align}
&\overrightarrow{c_0^{N-1}},\overrightarrow{c_1^{N-1}},\dots,
\overrightarrow{c_{M-1}^{N-1}},\overrightarrow{c_{M}^{N-1}} \nonumber \\
=&\overrightarrow{a_{12}}+N\hat{e}_1,\dots,
\overrightarrow{a_{12}}+(N+x_1-1)\hat{e}_1,
\overrightarrow{a_{12}}+(N+x_1-1)\hat{e}_1+\hat{e}_2, \nonumber \\
&\overrightarrow{a_{12}}+(N+x_1)\hat{e}_1+\hat{e}_2,\dots,
\overrightarrow{a_{12}}+(N+x_1+x_2-3)\hat{e}_1+\hat{e}_2,
\overrightarrow{a_{12}}+(N+x_1+x_2-3)\hat{e}_1+2\hat{e}_2, \nonumber \\
&\overrightarrow{a_{12}}+(N+x_1+x_2-2)\hat{e}_1+2\hat{e}_2,\dots,
\overrightarrow{a_{12}}+M\hat{e}_1+N\hat{e}_2.
\end{align}
The sequences of integers $1 \le x_1 < x_2 < \cdots < x_N \le M$
label the positions of bottom edges
of the partition functions where the difference
of the state vectors of adjacent vertices differ by $\hat{e}_2$.

The sum $\displaystyle \sum_{\{ \overrightarrow{b^0} \},\{ \overrightarrow{b^1} \},\dots,\{ \overrightarrow{b^{N-2}} \}}$ in \eqref{wavefunction}
means that we take the sum over all inner state vectors
$\overrightarrow{b_j^0}, \overrightarrow{b_j^1},\dots, \overrightarrow{b_j^{N-2}}$, $j=1,\dots,M-1$.
We remark that from the definition of the weights
\eqref{weightsone},\eqref{weightstwo},\eqref{weightsthree},
the dependence on the state vector $\overrightarrow{a_{12}}=(a_1,a_2)$
in the top left corner vertex is reflected to the partition functions
\eqref{wavefunction} in the form of the scalar quantity
$a_{12}=a_1+a_2+\omega_{12}$, hence we write the partition functions as
$W_{M,N}(u_1,\dots,u_N|v_1,\dots,v_M|x_1,\dots,x_N|a_{12})$.

\end{definition}

\begin{figure}[ht]
\includegraphics[width=12cm]{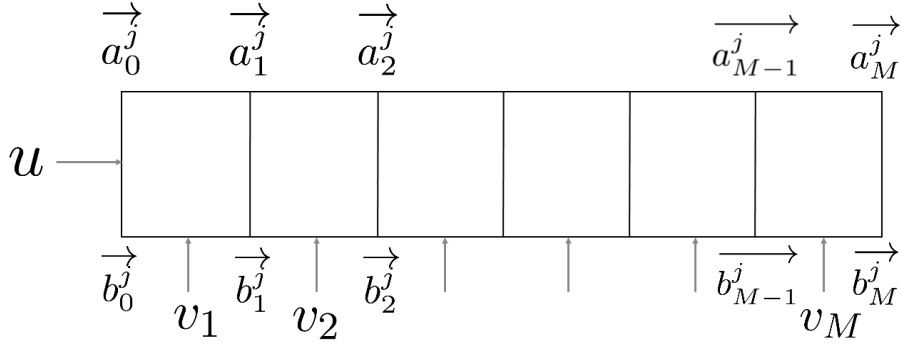}
\caption{The product of weights
\eqref{productweights} associated with one row configurations.
}
\label{pictureprodcutofweights}
\end{figure}

\begin{figure}[ht]
\includegraphics[width=12cm]{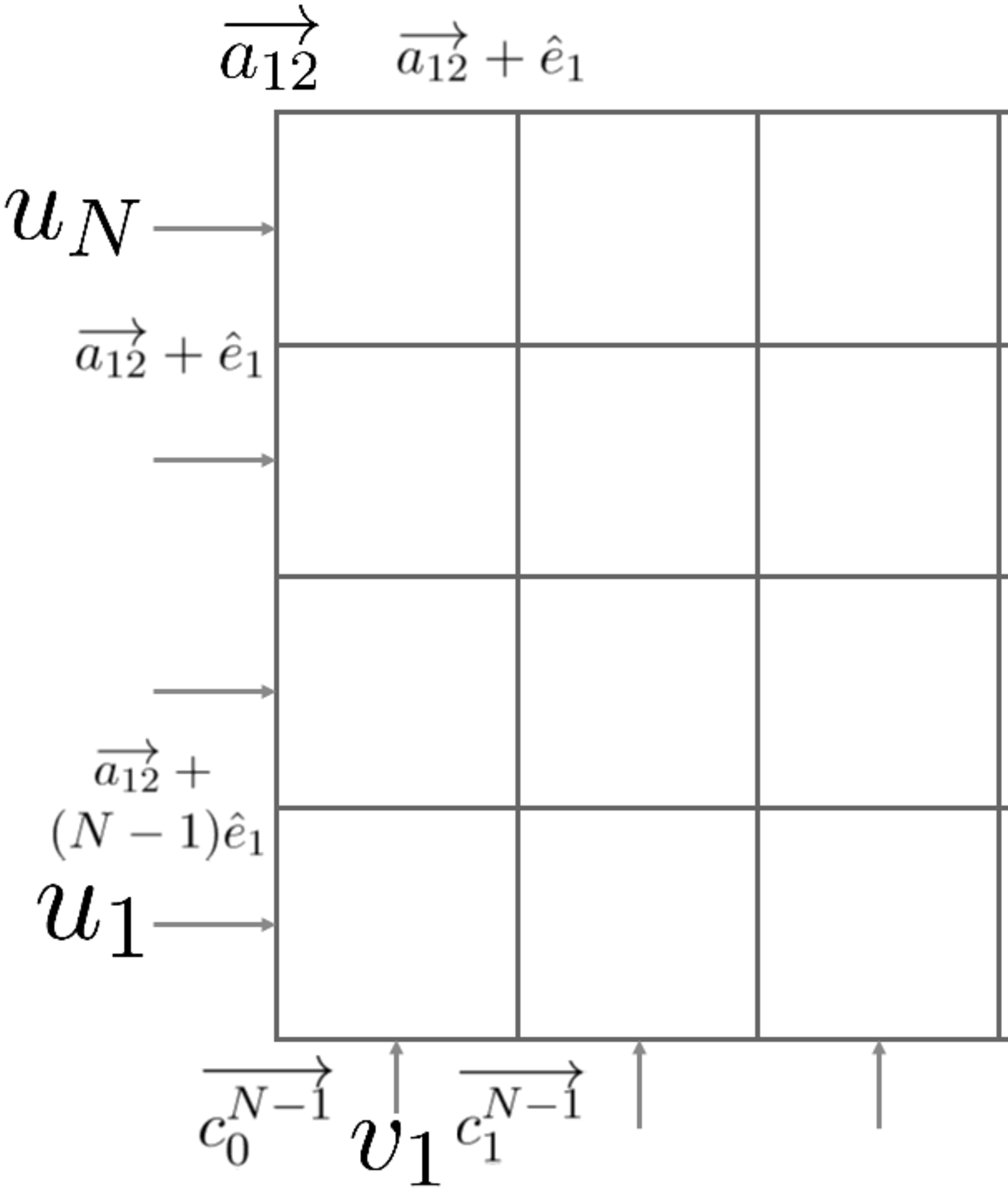}
\caption{The partition functions
$W_{M,N}(u_1,\dots,u_N|v_1,\dots,v_M|x_1,\dots,x_N|a_{12})$
\eqref{wavefunction}.
}
\label{picturewavefunctions}
\end{figure}

\begin{figure}[ht]
\includegraphics[width=12cm]{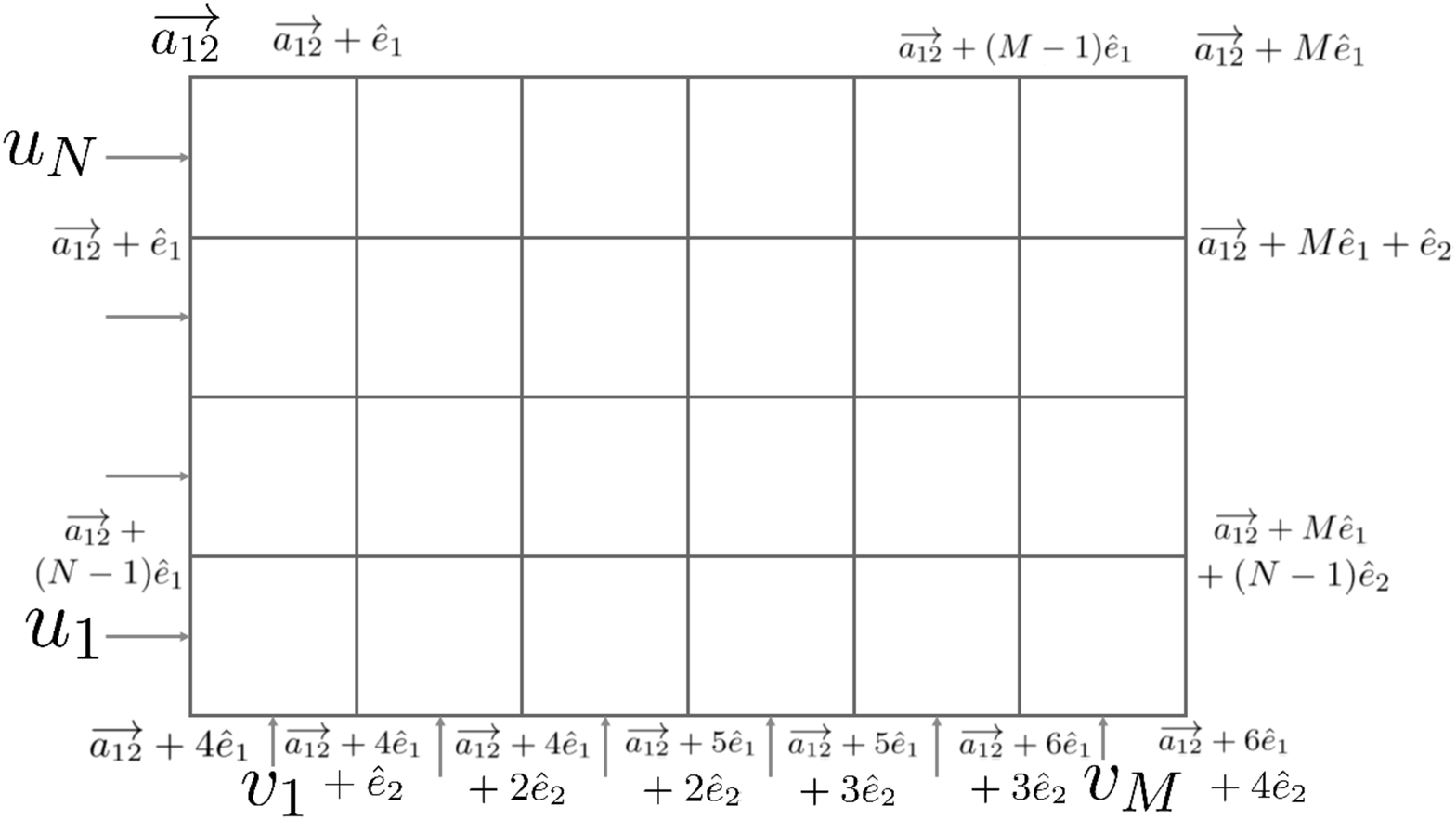}
\caption{An example
$W_{6,4}(u_1,\dots,u_4|v_1,\dots,v_6|1,2,4,6|a_{12})$
of the partition functions
\eqref{wavefunction}.
Note that $x_1=1$, $x_2=2$, $x_3=4$, $x_4=6$
labels the positions of the edges of the bottom part of the partition functions
where the difference of the adjacent state vectors of the vertices
is $\hat{e}_2$.
}
\label{picturewavefunctionsexample}
\end{figure}

\begin{figure}[ht]
\includegraphics[width=12cm]{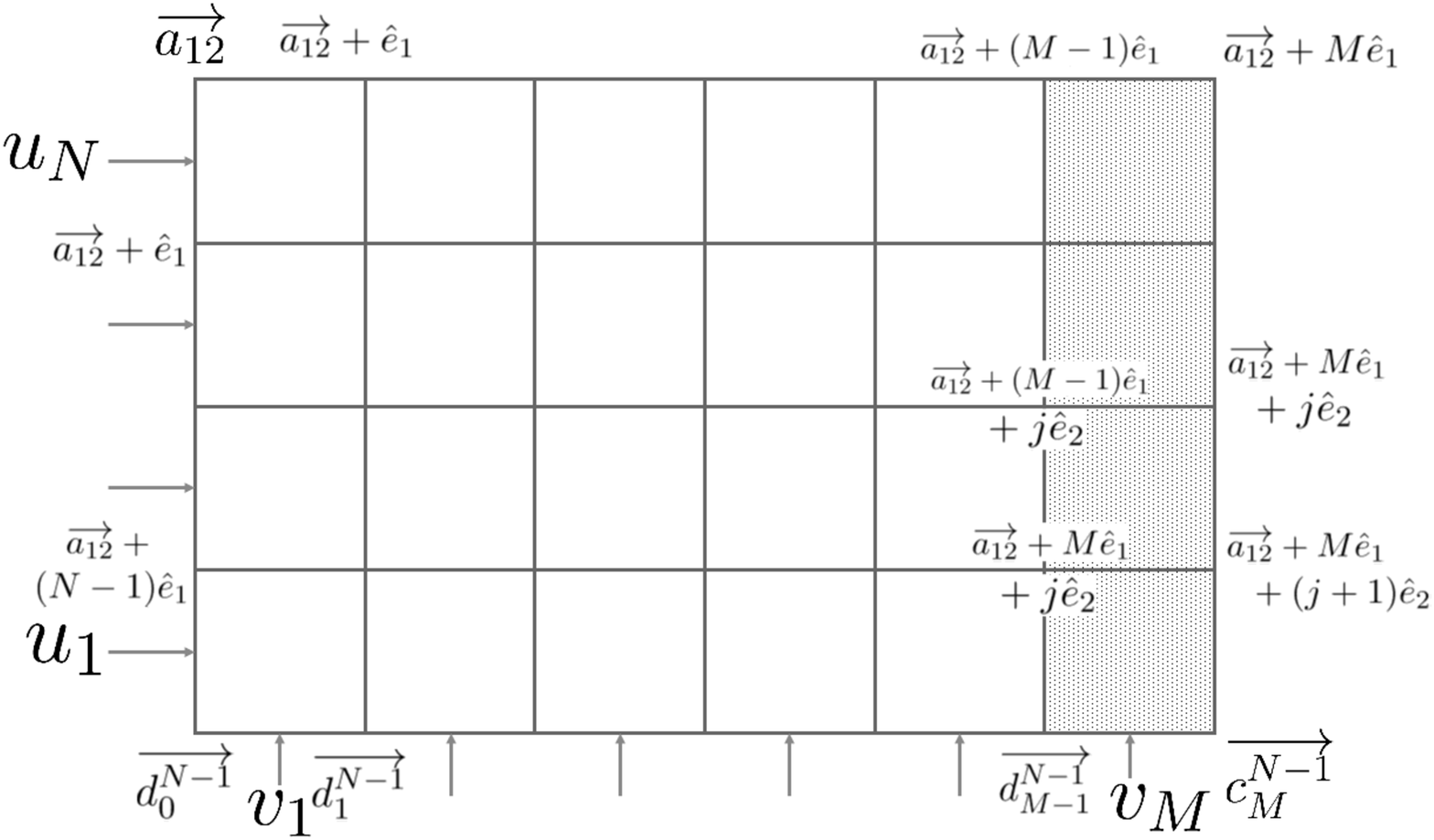}
\caption{A graphical description of a summand
in the decompostion of the partition functions
\eqref{separation}.
The unshaded part and the shaded part
corresponds to
$P_{j}(u_1,\dots,u_N|v_1,\dots,v_{M-1}|x_1,\dots,x_{N-1}
|a_{12})$ \eqref{separatedparttwo}
and
$
\widetilde{W}_j(u_1,\dots,u_N|v_M|a_{12})$
\eqref{separatedpartone}
,
respectively.
}
\label{picturedecomposition}
\end{figure}

\begin{figure}[ht]
\includegraphics[width=12cm]{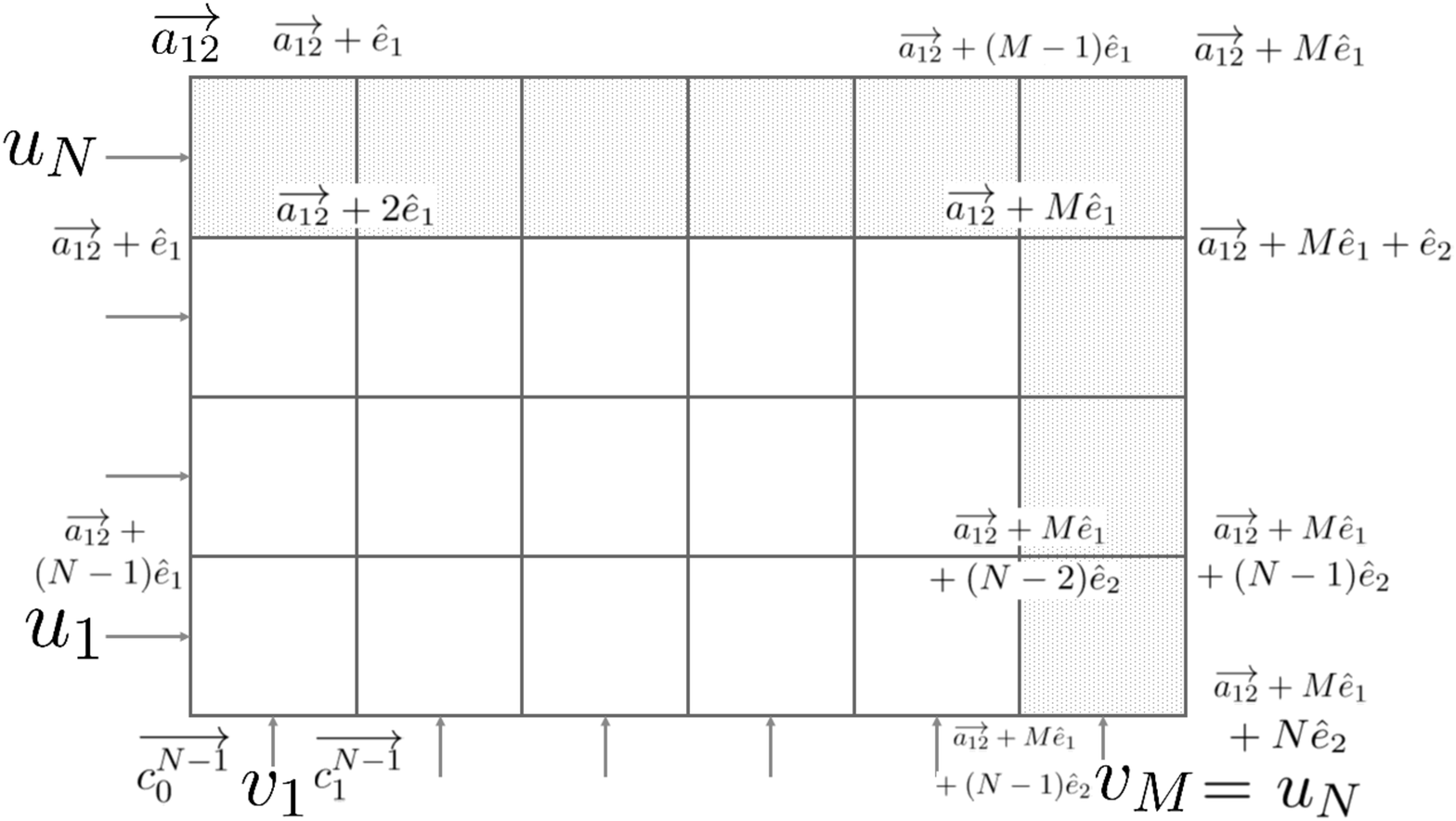}
\caption{The recursion relation
$W_{M,N}(u_1,\dots,u_N|v_1,\dots,v_M|x_1,\dots,x_N|a_{12})$,
$x_N=M$ evaluated at $v_M=u_N$
\eqref{ordinaryrecursionwavefunction}
.
The states of the vertices around the shaded faces are freezed.
}
\label{ordinarypicturerecursion}
\end{figure}

\begin{figure}[ht]
\includegraphics[width=12cm]{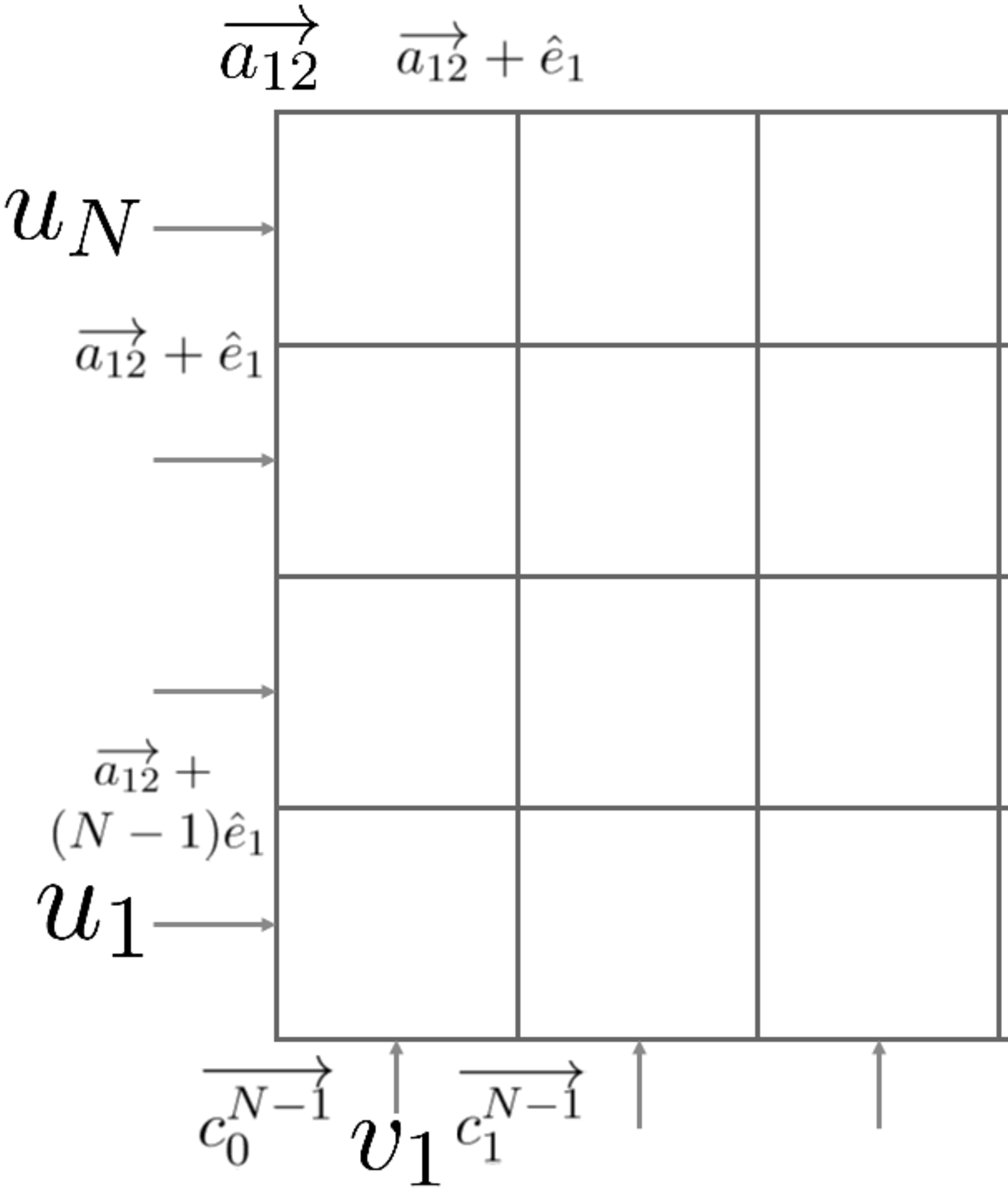}
\caption{The factorization of
$W_{M,N}(u_1,\dots,u_N|v_1,\dots,v_M|x_1,\dots,x_N|a_{12})$,
$x_N \neq M$
\eqref{ordinaryrecursionwavefunction2}
.
The states of the vertices around the shaded faces are freezed.
}
\label{ordinarypicturerecursion2}
\end{figure}

See Figures \ref{picturewavefunctions} and \ref{picturewavefunctionsexample}
for pictorial descriptions
of the definition and an example of the partition functions
\eqref{wavefunction}.
In the next section, we examine the properties of the 
partition functions.

\section{Izergin-Korepin analysis}

In this section, we determine the properties of the partition functions
of the elliptic supersymmetric face model
$W_{M,N}(u_1,\dots,u_N|v_1,\dots,v_M|x_1,\dots,x_N|a_{12})$
by extending the Izergin-Korepin analysis
on the domain wall boundary partition functions
\cite{Ko,Iz}.

\begin{proposition} 
\label{ordinarypropertiesfordomainwallboundarypartitionfunction}
The partition functions
$W_{M,N}(u_1,\dots,u_N|v_1,\dots,v_M|x_1,\dots,x_N|a_{12})$
satisfies the following properties. \\
\\
 (1) When $x_N=M$, the partition functions
$W_{M,N}(u_1,\dots,u_N|v_1,\dots,v_M|x_1,\dots,x_N|a_{12})$,
regarded as a polynomial in $\displaystyle y_M:=\frac{\lambda}{2K_1}v_M$,
is an elliptic polynomial in $\Theta_N(\chi)$.
\\
 (2) The partition functions $W_{M,N}(u_{\sigma(1)},\dots,u_{\sigma(N)}|v_1,\dots,v_M|x_1,\dots,x_N|a_{12})$ with the ordering of the spectral parameters permuted
$u_{\sigma(1)}, \dots, u_{\sigma(N)}$, $\sigma \in S_N$ are related with
the unpermuted one
$W_{M,N}(u_1,\dots,u_N|v_1,\dots,v_M|x_1,\dots,x_N|a_{12})$ by
the following relation
\begin{align}
&\prod_{\substack{1 \le j < k \le N \\ \sigma(j) > \sigma(k)}}
[1+u_{\sigma(k)}-u_{\sigma(j)}]
W_{M,N}(u_1,\dots,u_N|v_1,\dots,v_M|x_1,\dots,x_N|a_{12}) \nonumber \\
=&
\prod_{\substack{1 \le j < k \le N \\ \sigma(j) > \sigma(k)}}
[1+u_{\sigma(j)}-u_{\sigma(k)}]
W_{M,N}(u_{\sigma(1)},\dots,u_{\sigma(N)}|v_1,\dots,v_M|x_1,\dots,x_N|a_{12}).
\label{permutationwavefunction}
\end{align}
\\
(3) The following recursive relations between the
partition functions hold if $x_N=M$
(Figure \ref{ordinarypicturerecursion}):
\begin{align}
&W_{M,N}(u_1,\dots,u_N|v_1,\dots,v_M|x_1,\dots,x_N|a_{12})
|_{v_M=u_N}
\nonumber \\
=&\prod_{j=1}^{N-1} \frac{[1-u_j+u_N]}{[1]}
\prod_{j=1}^{M-1} \frac{[1+u_N-v_j]}{[1]}
\nonumber \\
&\times W_{M-1,N-1}(u_1,\dots,u_{N-1}|v_1,\dots,v_{M-1}|x_1,\dots,x_{N-1}
|a_{12}+1)
. \label{ordinaryrecursionwavefunction}
\end{align}

If $x_N \neq M$, the following factorizations hold for the partition functions
(Figure \ref{ordinarypicturerecursion2}):
\begin{align}
&W_{M,N}(u_1,\dots,u_N|v_1,\dots,v_M|x_1,\dots,x_N|a_{12})
 \nonumber \\
=&\frac{[a_{21}-M-N+1]}{[a_{21}-M+1]}
\prod_{j=1}^N \frac{[u_j-v_M]}{[1]}
W_{M-1,N}(u_1,\dots,u_N|v_1,\dots,v_{M-1}|x_1,\dots,x_N|a_{12}).
\label{ordinaryrecursionwavefunction2}
\end{align}
\\
(4) The following holds for the case $N=1$, $x_1=M$
\begin{align}
&
W_{M,1}(u|v_1,\dots,v_M|M|a_{12})=
\frac{[-u+v_M+a_{12}+M-1]}{[a_{12}+M-1]}
\prod_{k=1}^{M-1} \frac{[1+u-v_k]}{[1]}.
\label{ordinaryinitialrecursion}
\end{align}
\end{proposition}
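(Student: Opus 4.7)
The plan is to establish the four parts in roughly increasing order of difficulty. Part (4) is a direct computation: for $N=1$ and $x_1=M$ there is a unique admissible inner configuration in which the first $M-1$ faces are of type \eqref{weightsone} with $j=1$ and the $M$-th face is of type \eqref{weightsthree} with $(j,k)=(1,2)$ and face parameter $a_{12}^{\text{face}}=a_{12}+M-1$; multiplying the elementary weights reproduces \eqref{ordinaryinitialrecursion}.

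Part (3) uses ice-rule freezing, standard in the Izergin--Korepin analysis. When $x_N=M$, the top-right face admits exactly two admissible inner states, with weights carrying the factors $[u_N-v_M]$ and $[a_{12}^{\text{face}}-(u_N-v_M)]$; specializing $v_M=u_N$ kills the former, and admissibility then propagates this choice so as to freeze the entire top row into type \eqref{weightsone} with $j=1$ and the entire right column below it into type \eqref{weightsone} with $j=2$. Collecting the frozen weights gives the explicit prefactor in \eqref{ordinaryrecursionwavefunction}, while the remaining subgrid is isomorphic to $W_{M-1,N-1}$ with $a_{12}\to a_{12}+1$, the shift coming from the new top-left vertex $\overrightarrow{a_{12}}+\hat{e}_1$. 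When $x_N\neq M$, the boundary edge types at the bottom-right face (right edge $\hat{e}_2$, bottom edge $\hat{e}_1$) force it to type \eqref{weightstwo} with $(j,k)=(2,1)$; this choice propagates upward through the entire rightmost column, and the $N$-fold product telescopes via $a_{21}^{\text{face},j}=a_{21}-M+1-j$ to yield \eqref{ordinaryrecursionwavefunction2}.

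Part (2) follows from the face-type Yang--Baxter relation \eqref{yangbaxter} by the standard train argument. For the adjacent transposition $(k,k+1)$, one inserts an auxiliary type-\eqref{weightsone} face on the left boundary of the strip carrying the two rows with spectral parameters $u_k$ and $u_{k+1}$; with all four of its vertices linked by $\hat{e}_1$ it has weight $[1+u_k-u_{k+1}]/[1]$. Repeated application of \eqref{yangbaxter} transports it across the strip while exchanging the spectral labels of the two rows, and it emerges on the right boundary, with all vertices now linked by $\hat{e}_2$, as a type-\eqref{weightsone} face of weight $[1+u_{k+1}-u_k]/[1]$. Composing adjacent transpositions over a reduced decomposition of any $\sigma\in S_N$ yields \eqref{permutationwavefunction}.

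Part (1) is the most delicate step and is the main obstacle I foresee. The dependence of $W_{M,N}$ on $v_M$ is entirely carried by the $N$ faces of the $M$-th column, each contributing exactly one theta factor $[P(v_M)]$ with $P(v_M)=cv_M+d$ and $c=\pm 1$; hence $W_{M,N}$, as a function of $y_M=\lambda v_M/(2K_1)$, is a sum of products of $N$ theta functions, of elliptic order $N$, and each summand produces the sign $(-1)^N$ under $y_M\to y_M+1$. Under $y_M\to y_M+\tau$, every factor produces a multiplier $-q^{-1}\exp(-i\pi\lambda v_M/K_1)\exp(-ic\pi\lambda d/K_1)$, whose uniform part supplies the $e^{-2\pi iNy_M}$ and $q^{-N}$ required by the $\Theta_N$ definition, but whose configuration-dependent part $\exp(-i\pi\lambda\sum_ic_id_i/K_1)$ must be shown to be the same for every admissible column-$M$ configuration. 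These configurations are indexed by a single $m\in\{0,\dots,N-1\}$ locating the unique $\hat{e}_1$-step in the column: the face at row $m$ is of type \eqref{weightsthree} with $(j,k)=(1,2)$, those above it are of type \eqref{weightstwo} with $(j,k)=(2,1)$, and those below are of type \eqref{weightsone} with $j=2$. A direct computation produces $\sum_ic_id_i=a_{12}+M+N-2-\sum_{k=1}^{N}u_k$, independent of $m$, the critical cancellation being between the $(a_{12}+M-1+m)$ contribution of the type-\eqref{weightsthree} face at row $m$ and the analogous quantity at row $m+1$ when passing from configuration $m$ to configuration $m+1$. Once this uniformity is established, the common multiplier factors out of the sum and $W_{M,N}$, as a function of $y_M$, lies in a single $\Theta_N(\chi)$ with an explicitly computable character $\chi$.
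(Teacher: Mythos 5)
Your proposal is correct and follows essentially the same route as the paper: freezing arguments for Properties (3) and (4), the Yang--Baxter train argument for Property (2), and, for Property (1), isolating the $v_M$-dependence in the $M$-th column, enumerating its $N$ admissible configurations, and checking that the quasi-periodicity multipliers are configuration-independent (your value $\sum_i c_i d_i=a_{12}+M+N-2-\sum_k u_k$ matches the exponent in the paper's \eqref{wavefunctionqpsecond}). The paper merely packages the last step as an explicit decomposition $W_{M,N}=\sum_j P_j\widetilde{W}_j$ with the one-column factors $\widetilde{W}_j$ written out in closed form.
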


\begin{proof}
Properties (1), (2) and (3) for the case $x_N=M$
can be proved essentially in the same way with
the domain wall boundary partition functions,
which is given in the PhD thesis of Zuparic \cite{Zuparic}
(note there is a small difference in the setting.
For example, the ordering of the spectral parameters is inverted).
We just have to treat the case $x_N \neq M$ separately,
which is given as the factorization formula of the
partition functions in Property (3).

To show Property (1),
we first decompose the partition functions satisfying $x_N=M$
in the following form as we do for the case of vertex models.
We split the partition functions as a sum of the products
of the partition functions $P_j(u_1,\dots,u_N|v_1,\dots,v_{M-1}|x_1,\dots,x_{N-1}|a_{12})$ and one-column partition functions
$\widetilde{W}_j(u_1,\dots,u_N|v_M|a_{12})$
(see Figure \ref{picturedecomposition} for this decomposition)
\begin{align}
&W_{M,N}(u_1,\dots,u_N|v_1,\dots,v_M|x_1,\dots,x_N|a_{12})
 \nonumber \\
=&\sum_{j=0}^{N-1} P_{j}(u_1,\dots,u_N|v_1,\dots,v_{M-1}|x_1,\dots,x_{N-1}
|a_{12})
\widetilde{W}_j(u_1,\dots,u_N|v_M|a_{12}). \label{separation}
\end{align}
Here, $\widetilde{W}_j(u_1,\dots,u_N|v_M|a_{12})$
are the one-column partition functions
corresponding to the shaded part in Figure \ref{picturedecomposition}
\begin{align}
&\widetilde{W}_j(u_1,\dots,u_N|v_M|a_{12}) \nonumber \\
=&\prod_{k=0}^{j-1}
W \Bigg(
 \begin{array}{cc}
\overrightarrow{a_{12}}+(M-1)\hat{e}_1+k\hat{e}_2 & \overrightarrow{a_{12}}+M\hat{e}_1+k\hat{e}_2 \\
\overrightarrow{a_{12}}+(M-1)\hat{e}_1+(k+1)\hat{e}_2 & \overrightarrow{a_{12}}+M\hat{e}_1+(k+1)\hat{e}_2
 \end{array} 
\Bigg|
u_{N-k}
\Bigg|
v_M
\Bigg) \nonumber \\
&\times
W \Bigg(
 \begin{array}{cc}
\overrightarrow{a_{12}}+(M-1)\hat{e}_1+j\hat{e}_2 & \overrightarrow{a_{12}}+M\hat{e}_1+j\hat{e}_2 \\
\overrightarrow{a_{12}}+M\hat{e}_1+j\hat{e}_2 & \overrightarrow{a_{12}}+M\hat{e}_1+(j+1)\hat{e}_2
 \end{array} 
\Bigg|
u_{N-j}
\Bigg|
v_M
\Bigg) \nonumber \\
&\times \prod_{k=j}^{N-2}
W \Bigg(
 \begin{array}{cc}
\overrightarrow{a_{12}}+M\hat{e}_1+k\hat{e}_2 & \overrightarrow{a_{12}}+M\hat{e}_1+(k+1)\hat{e}_2 \\
\overrightarrow{a_{12}}+M\hat{e}_1+(k+1)\hat{e}_2 & \overrightarrow{a_{12}}+M\hat{e}_1+(k+2)\hat{e}_2
 \end{array} 
\Bigg|
u_{N-k-1}
\Bigg|
v_M
\Bigg). \label{separatedpartone}
\end{align}

The partition functions
$P_j(u_1,\dots,u_N|v_1,\dots,v_{M-1}|x_1,\dots,x_{N-1}|a_{12})$
corresponding to the unshaded part in Figure \ref{picturedecomposition}
is explicitly written as
\begin{align}
&P_j(u_1,\dots,u_N|v_1,\dots,v_{M-1}|x_1,\dots,x_{N-1}|a_{12})
=\sum_{\{ \overrightarrow{b^0} \},\{ \overrightarrow{b^1} \},\dots,\{ \overrightarrow{b^{N-2}} \}} \nonumber \\
&T_{M-1}^0 \Bigg(
 \begin{array}{ccccc}
\overrightarrow{a_{12}} & \overrightarrow{a_{12}}+\hat{e}_1 
& \cdots &
\overrightarrow{a_{12}}+(M-2)\hat{e}_1 
&
\overrightarrow{a_{12}}+(M-1)\hat{e}_1 \\
\overrightarrow{a_{12}}+\hat{e}_1 & \overrightarrow{b_1^0}
& \cdots &
\overrightarrow{b_{M-2}^0} 
&
\overrightarrow{a_{12}}+(M-1)\hat{e}_1+\hat{e}_2
 \end{array} 
\Bigg|
u_N
\Bigg) \nonumber \\
\times \prod_{k=1}^{j-1}
&T_{M-1}^k \Bigg(
 \begin{array}{ccccc}
\overrightarrow{a_{12}}+k\hat{e}_1 & \overrightarrow{b_{1}^{k-1}}
& \cdots &
\overrightarrow{b_{M-2}^{k-1}}
&
\overrightarrow{a_{12}}+(M-1)\hat{e}_1+k\hat{e}_2 \\
\overrightarrow{a_{12}}+(k+1)\hat{e}_1 & \overrightarrow{b_1^k}
& \cdots &
\overrightarrow{b_{M-2}^k}
&
\overrightarrow{a_{12}}+(M-1)\hat{e}_1+(k+1)\hat{e}_2
 \end{array} 
\Bigg|
u_{N-k}
\Bigg) \nonumber \\
\times&T_{M-1}^j \Bigg(
 \begin{array}{ccccc}
\overrightarrow{a_{12}}+j\hat{e}_1 & \overrightarrow{b_{1}^{j-1}}
& \cdots &
\overrightarrow{b_{M-2}^{j-1}}
&
\overrightarrow{a_{12}}+(M-1)\hat{e}_1+j\hat{e}_2 \\
\overrightarrow{a_{12}}+(j+1)\hat{e}_1 & \overrightarrow{b_1^j}
& \cdots &
\overrightarrow{b_{M-2}^j}
&
\overrightarrow{a_{12}}+M\hat{e}_1+j\hat{e}_2
 \end{array} 
\Bigg|
u_{N-j}
\Bigg) \nonumber \\
\times \prod_{k=j+1}^{N-2}
&T_{M-1}^k \Bigg(
 \begin{array}{ccccc}
\overrightarrow{a_{12}}+k\hat{e}_1 & \overrightarrow{b_{1}^{k-1}}
& \cdots &
\overrightarrow{b_{M-2}^{k-1}}
&
\overrightarrow{a_{12}}+M\hat{e}_1+(k-1)\hat{e}_2 \\
\overrightarrow{a_{12}}+(k+1)\hat{e}_1 & \overrightarrow{b_1^k}
& \cdots &
\overrightarrow{b_{M-2}^k}
&
\overrightarrow{a_{12}}+M\hat{e}_1+k\hat{e}_2
 \end{array} 
\Bigg|
u_{N-k}
\Bigg) \nonumber \\
\times
&T_{M-1}^{N-1} \Bigg(
 \begin{array}{ccccc}
\overrightarrow{a_{12}}+(N-1)\hat{e}_1 & \overrightarrow{b_{1}^{N-2}}
& \cdots &
\overrightarrow{b_{M-2}^{N-2}}
&
\overrightarrow{a_{12}}+M\hat{e}_1+(N-2)\hat{e}_2 \\
\overrightarrow{d_{0}^{N-1}} & \overrightarrow{d_{1}^{N-1}}
& \cdots &
\overrightarrow{d_{M-2}^{N-1}}
&
\overrightarrow{d_{M-1}^{N-1}}
 \end{array} 
\Bigg|
u_1
\Bigg), \label{separatedparttwo}
\end{align}
where the state vectors
$\overrightarrow{d_0^{N-1}},\overrightarrow{d_1^{N-1}},\dots,
\overrightarrow{d_{M-2}^{N-1}},\overrightarrow{d_{M-1}^{N-1}}$
are fixed using the sequence of integers $1 \le x_1 < x_2 < \cdots < x_{N-1} \le M-1$
as
\begin{align}
&\overrightarrow{d_0^{N-1}},\overrightarrow{d_1^{N-1}},\dots,
\overrightarrow{d_{M-2}^{N-1}},\overrightarrow{d_{M-1}^{N-1}} \nonumber \\
=&\overrightarrow{a_{12}}+N\hat{e}_1,\dots,
\overrightarrow{a_{12}}+(N+x_1-1)\hat{e}_1,
\overrightarrow{a_{12}}+(N+x_1-1)\hat{e}_1+\hat{e}_2, \nonumber \\
&\overrightarrow{a_{12}}+(N+x_1)\hat{e}_1+\hat{e}_2,\dots,
\overrightarrow{a_{12}}+(N+x_1+x_2-3)\hat{e}_1+\hat{e}_2,
\overrightarrow{a_{12}}+(N+x_1+x_2-3)\hat{e}_1+2\hat{e}_2, \nonumber \\
&\overrightarrow{a_{12}}+(N+x_1+x_2-2)\hat{e}_1+2\hat{e}_2,\dots,
\overrightarrow{a_{12}}+M\hat{e}_1+(N-1)\hat{e}_2.
\end{align}
We have omitted writing the dependence on the inhomogeneous parameters
in $T_{M-1}^k$
in \eqref{separatedparttwo}.
It depends on $v_1,\dots,v_{M-1}$, but not on $v_M$.
This means that
\eqref{separatedparttwo} does not depend on the inhomogeneous parameter $v_M$,
so we only have to examine the other parts \eqref{separatedpartone}
if one analyzes the partition functions as a function of $v_M$.
From the definition of the weights
\eqref{weightsone},
\eqref{weightstwo},
\eqref{weightsthree},
$\widetilde{W}_j(u_1,\dots,u_N|v_M|a_{12})$ is explicitly calculated as
\begin{align}
&\widetilde{W}_j(u_1,\dots,u_N|v_M|a_{12})
= \frac{[a_{12}+M-1+j-u_{N-j}+v_M]}{[a_{12}+M-1+j]} \nonumber \\
\times&\prod_{k=N-j+1}^{N} \frac{[u_k-v_M][a_{21}-M+k-N]}{[1][a_{21}-M+1+k-N]}
\prod_{k=1}^{N-j-1} \frac{[1-u_k+v_M]}{[1]}. \label{explicitforuse}
\end{align}

We then view $\widetilde{W}_j(u_1,\dots,u_N|v_M|a_{12})$
as a function of $v_M$.
From its explicit form \eqref{explicitforuse}
and using \eqref{qpuseone} and \eqref{qpusetwo},
$a_{12}=-a_{21}$ and $[-u]=-[u]$,
one easily calculates the quasi-periodicities
of the function $\widetilde{W}_j(u_1,\dots,u_N|v_M|a_{12})$
\begin{align}
&\widetilde{W}_j(u_1,\dots,u_N|v_M+2K_1/\lambda|a_{12}) \nonumber \\
=&(-1)^N \widetilde{W}_j(u_1,\dots,u_N|v_M|a_{12}), \label{qpfirst} \\
&\widetilde{W}_j(u_1,\dots,u_N|v_M+2 i K_2/\lambda|a_{12}) \nonumber \\
=&(-q^{-1})^N
\mathrm{exp} \Bigg(-\frac{i \pi\lambda}{K_1} 
\Bigg(Nv_M-\sum_{j=1}^N u_j+a_{12}+N+M-2 \Bigg)
\Bigg)
\widetilde{W}_j(u_1,\dots,u_N|v_M|a_{12}). \label{qpsecond}
\end{align}
From \eqref{qpfirst}, \eqref{qpsecond} and \eqref{separation},
one finds the same quasi-periodicities as above also hold for
the partition functions
\begin{align}
&W_{M,N}(u_1,\dots,u_N|v_1,\dots,v_M+2K_1/\lambda|x_1,\dots,x_N|a_{12})
\nonumber \\
=&(-1)^N W_{M,N}(u_1,\dots,u_N|v_1,\dots,v_M|x_1,\dots,x_N|a_{12}),
\label{wavefunctionqpfirst} \\
&
W_{M,N}(u_1,\dots,u_N|v_1,\dots,v_M+2 i K_2/\lambda|x_1,\dots,x_N|a_{12})
\nonumber \\
=&(-q^{-1})^N
\mathrm{exp} \Bigg(-\frac{i \pi\lambda}{K_1} 
\Bigg(Nv_M-\sum_{j=1}^N u_j+a_{12}+N+M-2 \Bigg)
\Bigg) \nonumber \\
&\times W_{M,N}(u_1,\dots,u_N|v_1,\dots,v_M|x_1,\dots,x_N|a_{12})
. \label{wavefunctionqpsecond}
\end{align}
If we rescale the parameter from $v_M$ to
$\displaystyle y_M:=\frac{\lambda}{2K_1} v_M$,
and define $\phi(y_M)$ as
\begin{align}
\phi(y_M)=W_{M,N}(u_1,\dots,u_N|v_1,\dots,v_M|x_1,\dots,x_N|a_{12}),
\end{align}
the quasi-periodicities \eqref{wavefunctionqpfirst} and \eqref{wavefunctionqpsecond} for $W_{M,N}(u_1,\dots,u_N|v_1,\dots,v_M|x_1,\dots,x_N|a_{12})$
can be rewritten as a function $\phi(y_M)$ of $y_M$ as
\begin{align}
\phi(y_M+1)&=(-1)^N \phi(y_M), \\
\phi(y_M+\tau)&=(-1)^N \mathrm{exp} \Bigg( \frac{\pi \lambda \tau}{K_2}
\Bigg(
\sum_{j=1}^N u_j-a_{12}-N-M+2
\Bigg)
\Bigg)
e^{-2\pi i Ny_M-\pi i N \tau} \phi(y_M),
\end{align}
where the nome $\tau$ is given by $\displaystyle \tau=i \frac{K_2}{K_1}$.
These are exactly the quasi-periodicities
\eqref{propertyuseone}, \eqref{propertyusetwo} with characters
$\chi(1)=(-1)^N$, $\displaystyle \chi(\tau)=(-1)^N \mathrm{exp} 
\Bigg( \frac{\pi\lambda \tau}{K_2}
\Bigg(
\sum_{j=1}^N u_j-a_{12}-N-M+2
\Bigg)
\Bigg)$.
Hence, it is an elliptic polynomial in $\Theta_N(\chi)$.

Property (2) can also be shown in a standard way.
By attaching a face having the weight \\
$
W \Bigg(
 \begin{array}{cc}
\overrightarrow{a_{12}}+M\hat{e}_1+(N-j-1)\hat{e}_2 & \overrightarrow{a_{12}}+M\hat{e}_1+(N-j)\hat{e}_2 \\
\overrightarrow{a_{12}}+M\hat{e}_1+(N-j)\hat{e}_2 & \overrightarrow{a_{12}}
+M\hat{e}_1+(N-j+1)\hat{e}_2
 \end{array} 
\Bigg|
u_{j+1}
\Bigg|
u_j
\Bigg)
$
on the right side of the partition functions
$W_{M,N}(u_1,\dots,u_N|v_1,\dots,v_M|x_1,\dots,x_N|a_{12})$
and using the face-type Yang-Baxter relation \eqref{yangbaxter}
repeatedly,
a face having the weight \\
$
W \Bigg(
 \begin{array}{cc}
\overrightarrow{a_{12}}+(N-j-1)\hat{e}_1 & \overrightarrow{a_{12}}+(N-j)\hat{e}_1 \\
\overrightarrow{a_{12}}+(N-j)\hat{e}_1 & \overrightarrow{a_{12}}
+(N-j+1)\hat{e}_1
 \end{array} 
\Bigg|
u_{j+1}
\Bigg|
u_j
\Bigg)
$
gets out of the left side of the partition functions.
The ordering of the spectral parameters of the partition functions
is now changed to
$W_{M,N}(u_1,\dots,u_{j+1},u_j,\dots,u_N|v_1,\dots,v_M|x_1,\dots,x_N|a_{12})$,
and we get
\begin{align}
&W \Bigg(
 \begin{array}{cc}
\overrightarrow{a_{12}}+M\hat{e}_1+(N-j-1)\hat{e}_2 & \overrightarrow{a_{12}}+M\hat{e}_1+(N-j)\hat{e}_2 \\
\overrightarrow{a_{12}}+M\hat{e}_1+(N-j)\hat{e}_2 & \overrightarrow{a_{12}}
+M\hat{e}_1+(N-j+1)\hat{e}_2
 \end{array} 
\Bigg|
u_{j+1}
\Bigg|
u_j
\Bigg) \nonumber \\
\times&W_{M,N}(u_1,\dots,u_N|v_1,\dots,v_M|x_1,\dots,x_N|a_{12})
\nonumber \\
=&
W \Bigg(
 \begin{array}{cc}
\overrightarrow{a_{12}}+(N-j-1)\hat{e}_1 & \overrightarrow{a_{12}}+(N-j)\hat{e}_1 \\
\overrightarrow{a_{12}}+(N-j)\hat{e}_1 & \overrightarrow{a_{12}}
+(N-j+1)\hat{e}_1
 \end{array} 
\Bigg|
u_{j+1}
\Bigg|
u_j
\Bigg)
\nonumber \\
\times&W_{M,N}(u_1,\dots,u_{j+1},u_j,\dots,u_N|v_1,\dots,v_M|x_1,\dots,x_N|a_{12}).
\end{align}
Simplifying the above equality, we have
\begin{align}
&[1+u_{j}-u_{j+1}]W_{M,N}(u_1,\dots,u_N|v_1,\dots,v_M|x_1,\dots,x_N|a_{12})
\nonumber \\
=&
[1+u_{j+1}-u_j]W_{M,N}(u_1,\dots,u_{j+1},u_j,\dots,u_N|v_1,\dots,v_M|x_1,\dots,x_N|a_{12}). \label{basiccommutation}
\end{align}
The general case
\eqref{permutationwavefunction} can be obtained by
using the commutation relation \eqref{basiccommutation} repeatedly.
For example, the case $N=3$, $\sigma=(3,1,2)$ can be shown as follows:
\begin{align}
&W_{M,3}(u_1,u_2,u_3|v_1,\dots,v_M|x_1,x_2,x_3|a_{12}) \nonumber \\
=&\frac{[1+u_3-u_2]}{[1+u_2-u_3]}
W_{M,3}(u_1,u_3,u_2|v_1,\dots,v_M|x_1,x_2,x_3|a_{12}) \nonumber \\
=&\frac{[1+u_3-u_2]}{[1+u_2-u_3]}
\frac{[1+u_3-u_1]}{[1+u_1-u_3]}
W_{M,3}(u_3,u_1,u_2|v_1,\dots,v_M|x_1,x_2,x_3|a_{12})
.
\end{align}

Let us show Property (3) for the case $x_N = M$.
Before showing this,
we remark that from Property (1),
the partition functions
$W_{M,N}(u_1,\dots,u_N|v_1,\dots,v_M|x_1,\dots,x_N|a_{12})$
is an elliptic polynomial of degree $N$ in $v_M$,
and the properties of the elliptic polynomials listed in
Proposition \ref{propositionelliptic}
allow us to use the Izergin-Korepin trick.
If one relates the evaluations of
the partition functions at $N$ distinct points in $v_M$
with smaller ones, and if one can find the functions
satisfying the same recursion, then the Izergin-Korepin method
is successful. One should keep in mind that
for the case of the partition functions, one has another
case $x_N \neq M$, which must be taken into account.
We also need Property (4) to assure the uniqueness,
since this property corresponds to the initial condition
of the recursion relation.
\eqref{ordinaryrecursionwavefunction} is the result of the
evaluation of the partition functions
at the point $v_M=u_N$.
The other $(N-1)$ points $v_M=u_j$, $j=1,\dots,N-1$
can be evaluated using Property (2),
hence if one shows that certain explicit functions satisfy Property (2)
and the evaluation at $v_M=u_N$,
the proof of the recursion relation for the case $x_N=M$ is succeeded.
This will be performed in section 5.

One can show the recursion relation
\eqref{ordinaryrecursionwavefunction}
by using the graphical
representation of the partition functions $W_{M,N}(u_1,\dots,u_N|v_1,\dots,v_M|x_1,\dots,x_N|a_{12})$
(see Figure \ref{ordinarypicturerecursion}).
If one sets $v_M=u_N$ and use
$
W \Bigg(
 \begin{array}{cc}
\overrightarrow{a} & \overrightarrow{a}+\hat{e}_1 \\
\overrightarrow{a}+\hat{e}_2 & \overrightarrow{a}+\hat{e}_1+\hat{e}_2
 \end{array} 
\Bigg|
u_N
\Bigg|
v_M
\Bigg)\Bigg|_{v_M=u_N}
=0
$,
one first finds that the state vectors around the face
of the top right corner freeze.
Continuing graphical considerations, one finds that
all the state vectors at the rightmost column and the top row freeze.
One can also see that the state of the top left corner
of remaining unfreezed part is now $\overrightarrow{a_{12}}+\hat{e}_1$,
thus the unfreezed part is the partition functions
$W_{M-1,N-1}(u_1,\dots,u_{N-1}|v_1,\dots,v_{M-1}|x_1,\dots,x_{N-1}|a_{12}+1)$.
Hence, the partition functions of the face model
$W_{M,N}(u_1,\dots,u_{N}|v_1,\dots,v_{M}|x_1,\dots,x_{N}|a_{12})$
evaluated at $v_M=u_N$ is given as the product of the weights
of the freezed faces
and a smaller partition functions
$W_{M-1,N-1}(u_1,\dots,u_{N-1}|v_1,\dots,v_{M-1}|x_1,\dots,x_{N-1}|a_{12}+1)$,
and we have the following relation
\begin{align}
&W_{M,N}(u_1,\dots,u_N|v_1,\dots,v_M|x_1,\dots,x_N|a_{12})
|_{v_M=u_N}
\nonumber \\
=&\prod_{j=1}^{N-1} \frac{[1-u_j+u_N]}{[1]}
\prod_{j=1}^{M-1} \frac{[1+u_N-v_j]}{[1]}
\nonumber \\
&\times W_{M-1,N-1}(u_1,\dots,u_{N-1}|v_1,\dots,v_{M-1}|x_1,\dots,x_{N-1}
|a_{12}+1)
.
\end{align}

Property (3) for the case $x_N \neq M$ can also be
easily shown from its graphical representation
(Figure \ref{ordinarypicturerecursion2}).
In this case, one finds the rightmost column freeze,
and the remaining unfreezed part is the partition functions
$W_{M-1,N}(u_1,\dots,u_{N}|v_1,\dots,v_{M-1}|x_1,\dots,x_{N}|a_{12})$.
Thus, we find that the partition functions
$W_{M,N}(u_1,\dots,u_{N}|v_1,\dots,v_{M}|x_1,\dots,x_{N}|a_{12})$
is the product of a smaller partition functions
$W_{M-1,N}(u_1,\dots,u_{N}|v_1,\dots,v_{M-1}|x_1,\dots,x_{N}|a_{12})$
and the product of the weights of the freezed faces
at the rightmost column
\begin{align}
&W \Bigg(
 \begin{array}{cc}
\overrightarrow{a_{12}}+(M-1)\hat{e}_1+(j-1)\hat{e}_2 & \overrightarrow{a_{12}}
+M\hat{e}_1+(j-1)\hat{e}_2 \\
\overrightarrow{a_{12}}+(M-1)\hat{e}_1+j\hat{e}_2 & \overrightarrow{a_{12}}
+M\hat{e}_1+j\hat{e}_2
 \end{array} 
\Bigg|
u_{N+1-j}
\Bigg|
v_M
\Bigg) \nonumber \\
=&\frac{[u_{N+1-j}-v_M][a_{21}-(M-1)-j]}{[1][a_{21}-(M-1)-j+1]},
\ \ \ j=1,\dots,N,
\end{align}
and we get the following factorization
\begin{align}
&W_{M,N}(u_1,\dots,u_N|v_1,\dots,v_M|x_1,\dots,x_N|a_{12})
 \nonumber \\
=&\prod_{j=1}^N \frac{[u_{N+1-j}-v_M][a_{21}-(M-1)-j]}{[1][a_{21}-(M-1)-j+1]}
W_{M-1,N}(u_1,\dots,u_N|v_1,\dots,v_{M-1}|x_1,\dots,x_N) \nonumber \\
=&\frac{[a_{21}-M-N+1]}{[a_{21}-M+1]}
\prod_{j=1}^N \frac{[u_j-v_M]}{[1]}
W_{M-1,N}(u_1,\dots,u_N|v_1,\dots,v_{M-1}|x_1,\dots,x_N|a_{12})
.
\end{align}
This shows Property (3) for the case $x_N \neq M$.

Property (4) can also be proved in the same way with Property (3).
One immediately sees that all the state vectors freeze, and multiplying 
all the weights of the freezed faces gives
\begin{align}
&
W_{M,1}(u|v_1,\dots,v_M|M|a_{12})=
\frac{[-u+v_M+a_{12}+M-1]}{[a_{12}+M-1]}
\prod_{k=1}^{M-1} \frac{[1+u-v_k]}{[1]}.
\end{align}
\end{proof}

\section{Partition functions and Elliptic multivariable functions}

In this section, we introduce the
elliptic Schur-type symmetric function.
The elliptic symmetric function,
together with the elliptic factors represents the partition functions
of the elliptic supersymmetric integrable model.

\begin{definition}
We define the following elliptic symmetric function \\
$E_{M,N}(u_1,\dots,u_N|v_1,\dots,v_M|x_1,\dots,x_N|a_{12})$
which depends on the symmetric variables \\
$u_1,\dots,u_N$,
complex parameters $v_1,\dots,v_M$, a complex parameter $a_{12}$
and integers $x_1,\dots,x_N$ satisfying
$1 \le x_1 < \cdots < x_N \le M$,
\begin{align}
&E_{M,N}(u_1,\dots,u_N|v_1,\dots,v_M|x_1,\dots,x_N|a_{12})
\nonumber \\
=&
\prod_{k=1}^N \prod_{x_{k+1}^M-2 \ge j \ge x_k^M}
\frac{[a_{12}+j+N]}{[a_{12}+j+N-k]}
\sum_{\sigma \in S_N}
\prod_{1 \le j < k \le N} \frac{[1]}{[u_{\sigma(j)}-u_{\sigma(k)}]}
\prod_{j=1}^N \prod_{k=x_j+1}^M \frac{[u_{\sigma(j)}-v_k]}{[1]}
\nonumber \\
&\times
\prod_{j=1}^N \frac{[-u_{\sigma(j)}+v_{x_j}+a_{12}+x_j+N-2]}{[a_{12}+x_j+N-2]}
\prod_{j=1}^N \prod_{k=1}^{x_j-1}
\frac{[1+u_{\sigma(j)}-v_k]}{[1]}.
\label{ordinaryrighthandside}
\end{align}
Here, $x_k^M$, $k=1,\dots,N+1$ is defined as $x_{N+1}^M=M+1$
and $x_k^M=x_k$, $k=1,\dots,N$.
\end{definition}
One notes that the expression of the multivariable elliptic function
\eqref{ordinaryrighthandside}
can also be written in the Schur-like determinant form
\begin{align}
&E_{M,N}(u_1,\dots,u_N|v_1,\dots,v_M|x_1,\dots,x_N|a_{12})
\nonumber \\
=&
\prod_{k=1}^N \prod_{x_{k+1}^M-2 \ge j \ge x_k^M}
\frac{[a_{12}+j+N]}{[a_{12}+j+N-k]}
\prod_{1 \le j < k \le N} \frac{[1]}{[u_{j}-u_{k}]}
\mathrm{det}_N (f_{x_j}(u_k|v_1,\dots,v_M)), \\
&f_{x}(u|v_1,\dots,v_M)=
\frac{[-u+v_x+a_{12}+x+N-2]}{[a_{12}+x+N-2]}
\prod_{k=1}^{x-1} \frac{[1+u-v_k]}{[1]}
\prod_{k=x+1}^M \frac{[u-v_k]}{[1]}.
\end{align}
We now state the main theorem of this paper.

\begin{theorem} \label{maintheoremstatement}
The partition functions of the elliptic supersymmetric integrable model
\\
$W_{M,N}(u_1,\dots,u_N|v_1,\dots,v_M|x_1,\dots,x_N|a_{12})$
is explicitly expressed as the
product of elliptic factors
$\displaystyle \prod_{1 \le j < k \le N} \frac{[1+u_k-u_j]}{[1]}$ and 
the elliptic symmetric function \\
$E_{M,N}(u_1,\dots,u_N|v_1,\dots,v_M|x_1,\dots,x_N|a_{12})$
\begin{align}
&W_{M,N}(u_1,\dots,u_N|v_1,\dots,v_M|x_1,\dots,x_N|a_{12}) \nonumber \\
=&
\prod_{1 \le j < k \le N} \frac{[1+u_k-u_j]}{[1]}
E_{M,N}(u_1,\dots,u_N|v_1,\dots,v_M|x_1,\dots,x_N|a_{12}).
\label{maintheorem}
\end{align}
\end{theorem}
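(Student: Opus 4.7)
The plan is to verify that the right-hand side of \eqref{maintheorem}, which I will denote $\widetilde{W}_{M,N}$, satisfies all four properties (1)--(4) of Proposition~\ref{ordinarypropertiesfordomainwallboundarypartitionfunction}. Since Property~(1) places $W_{M,N}$ (for $x_N=M$) in the $N$-dimensional space $\Theta_N(\chi)$, Property~(2) together with Property~(3) at $v_M=u_N$ provides the values of $W_{M,N}$ at $N$ points $v_M=u_1,\dots,u_N$, and Property~(4) serves as the base of an induction on $(M,N)$, Proposition~\ref{propositionelliptic} forces $W_{M,N}=\widetilde{W}_{M,N}$ once $\widetilde{W}_{M,N}$ also satisfies (1)--(4). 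When $x_N\neq M$ the factorization in Property~(3) performs the induction step directly without invoking the elliptic polynomial argument.

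Property~(2) is essentially immediate: $E_{M,N}$ is manifestly symmetric in $u_1,\dots,u_N$ because it is summed over $S_N$, so all asymmetry in $\widetilde{W}_{M,N}$ sits in the prefactor $\prod_{j<k}[1+u_k-u_j]/[1]$, whose transformation under $u\mapsto u_\sigma$ matches exactly the ratio in \eqref{permutationwavefunction}. Property~(4) is a direct evaluation: at $N=1$, $x_1=M$, the double product in the prefactor of $E_{M,1}$ runs over the empty range $M-1\ge j\ge M$, the sum over $S_1$ collapses to one term, and the first product inside it (over $k=x_1+1,\dots,M$) is empty, leaving exactly \eqref{ordinaryinitialrecursion}.

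For Property~(1) I would isolate the $v_M$-dependence of $\widetilde{W}_{M,N}$ at $x_N=M$. In each $\sigma$-term, $v_M$ enters through the $N-1$ factors $[u_{\sigma(j)}-v_M]$ coming from $j=1,\dots,N-1$ (since $x_j<M$ and the product $\prod_{k=x_j+1}^M$ includes $k=M$) together with one factor $[-u_{\sigma(N)}+v_M+a_{12}+M+N-2]$ at $j=N$, for a total degree $N$ in $v_M$. Using the quasi-periodicities \eqref{qpuseone}--\eqref{qpusetwo}, the rescaled variable $y_M=\lambda v_M/(2K_1)$ picks up the factor $(-1)^N$ under $y_M\mapsto y_M+1$ and $(-1)^N\mathrm{exp}\bigl(\tfrac{\pi\lambda\tau}{K_2}(\sum_{j=1}^N u_j-a_{12}-N-M+2)\bigr)e^{-2\pi i Ny_M-\pi iN\tau}$ under $y_M\mapsto y_M+\tau$, matching exactly the character $\chi$ identified in Proposition~\ref{ordinarypropertiesfordomainwallboundarypartitionfunction}.

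The main obstacle is Property~(3). The non-recursive case $x_N\neq M$ is straightforward: $[u_{\sigma(j)}-v_M]$ factors out of every $\sigma$-term uniformly, and the $k=N$ contribution of the prefactor product changes by exactly $[a_{12}+M+N-1]/[a_{12}+M-1]$ when $M\to M-1$, which equals $[a_{21}-M-N+1]/[a_{21}-M+1]$ via $a_{12}=-a_{21}$ and $[-u]=-[u]$, reproducing \eqref{ordinaryrecursionwavefunction2}. The delicate case is $x_N=M$ specialized to $v_M=u_N$: for any $j<N$ the factor $[u_{\sigma(j)}-v_M]=[u_{\sigma(j)}-u_N]$ vanishes whenever $\sigma(j)=N$, so only permutations with $\sigma(N)=N$ survive; in those surviving terms, the remaining $[u_{\sigma(j)}-u_N]$ with $j<N$ cancel exactly against the $[u_{\sigma(j)}-u_{\sigma(N)}]$ in the symmetrizer denominator $\prod_{j<k}[1]/[u_{\sigma(j)}-u_{\sigma(k)}]$, the $j=N$ evaluation factor equals $1$, and the $j=N$ part $\prod_{k=1}^{M-1}[1+u_N-v_k]/[1]$ factors out along with $\prod_{j=1}^{N-1}[1+u_N-u_j]/[1]$ from the external product. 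What remains is $\widetilde{W}_{M-1,N-1}$ with parameter shift $a_{12}\to a_{12}+1$; the one routine but delicate bookkeeping step is matching the $k$-indexed double product $\prod_{x_{k+1}^M-2\ge j\ge x_k^M}[a_{12}+j+N]/[a_{12}+j+N-k]$ against the corresponding one for $(M-1,N-1)$ under the shift $a_{12}\to a_{12}+1$ and the replacement $x_{N+1}^M=M+1\mapsto x_N^{M-1}=M$, which reconciles the two prefactors exactly.
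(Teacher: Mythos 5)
Your proposal is correct and follows essentially the same route as the paper: verifying that the candidate expression satisfies Properties (1)--(4) of Proposition \ref{ordinarypropertiesfordomainwallboundarypartitionfunction} (quasi-periodicity in $y_M$, the exchange relation via symmetry of $E_{M,N}$ plus the prefactor, the two recursions with the $\sigma(N)=N$ reduction and the $[u_{\sigma(j)}-v_M]$ factorization, and the $N=1$ evaluation), then invoking the uniqueness supplied by Proposition \ref{propositionelliptic} and induction. The individual computations you sketch, including the prefactor bookkeeping under $x_{N+1}^M=M+1\mapsto x_N^{M-1}=M$ and the identity $[a_{12}+M+N-1]/[a_{12}+M-1]=[a_{21}-M-N+1]/[a_{21}-M+1]$, coincide with those in Lemmas \ref{lemmaone}--\ref{lemmafour}.
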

We give the proof of Theorem \ref{maintheoremstatement}
in the next section.
The correspondence between the partition functions of integrable models
and the product of factors and symmetric functions was first
obtained for the trigonometric free-fermionic vertex models
by Bump-Brubaker-Friedberg \cite{BBF}
and Bump-McNamara-Nakasuji \cite{BMN}.
They showed that the partition functions are given as
the product of the deformed Vandermonde
determinant and (factorial) Schur functions,
from which the Tokuyama formula was realized in the language
of integrable vertex models.
Theorem \ref{maintheoremstatement} can be regarded as an elliptic analog
of the correspondence in some sense.

To explain the meaning of this, let us apply
the idea of Bump-Brubaker-Friedberg \cite{BBF}
and Bump-McNamara-Nakasuji \cite{BMN}.
The idea goes as follows.
First, we will show in the next section that
the partition functions $W_{M,N}(u_1,\dots,u_N|v_1,\dots,v_M|x_1,\dots,x_N|a_{12})$
are expressed using elliptic Vandermonde determinants
and elliptic Schur functions as \eqref{maintheorem}.

On the other hand, there is a direct way to evaluate the partition functions
from its definition (Definition \ref{definitionofwavefunctions}).
For example, when we directly compute the partition function \\
$W_{3,2}(u_1,u_2,u_3|v_1,v_2|1,3|a_{12})$ from its definition,
we find that there are exactly three inner states making non-zero contributions
to the partition function, and we have
\begin{align}
&W_{3,2}(u_1,u_2,u_3|v_1,v_2|1,3|a_{12}) \nonumber \\
=&\frac{[1+u_2-v_1]}{[1]} \frac{[1+u_2-v_2]}{[1]} \frac{[a_{12}+2-u_2+v_3]}{[a_{12}+2]} \nonumber \\
\times&\frac{[a_{12}+1-u_1+v_1]}{[a_{12}+1]} \frac{[u_1-v_2][-a_{12}-3]}{[1][-a_{12}-2]} \frac{[1-u_1+v_3]}{[1]} \nonumber \\
+&\frac{[1+u_2-v_1]}{[1]} \frac{[a_{12}+1-u_2+v_2]}{[a_{12}+1]} \frac{[u_2-v_3][-a_{12}-3]}{[1][-a_{12}-2]} \nonumber \\
\times&\frac{[a_{12}+1-u_1+v_1]}{[a_{12}+1]} \frac{[-a_{12}-2-u_1+v_2]}{[-a_{12}-2]} \frac{[a_{12}+3-u_1+v_3]}{[a_{12}+3]} \nonumber \\
+&\frac{[a_{12}-u_2+v_1]}{[a_{12}]} \frac{[u_2-v_2][-a_{12}-2]}{[1][-a_{12}-1]} \frac{[u_2-v_3][-a_{12}-3]}{[1][-a_{12}-2]} \nonumber \\
\times&\frac{[u_1-v_1][a_{12}]}{[1][a_{12}+1]} \frac{[1+u_1-v_2]}{[1]} \frac{[a_{12}+3-u_1+v_3]}{[a_{12}+3]}. \label{useforcomparison}
\end{align}
Combining
\eqref{maintheorem}
and
\eqref{useforcomparison},
we get an expansion expression for the
product of elliptic factors
$\displaystyle \prod_{1 \le j < k \le 3} \frac{[1+u_k-u_j]}{[1]}$ and 
the elliptic symmetric function
$E_{3,2}(u_1,u_2,u_3|v_1,v_2|1,3|a_{12})$
\begin{align}
&\prod_{1 \le j < k \le 3} \frac{[1+u_k-u_j]}{[1]}
E_{3,2}(u_1,u_2,u_3|v_1,v_2|1,3|a_{12}) \nonumber \\
=&\frac{[1+u_2-v_1]}{[1]} \frac{[1+u_2-v_2]}{[1]} \frac{[a_{12}+2-u_2+v_3]}{[a_{12}+2]} \nonumber \\
\times&\frac{[a_{12}+1-u_1+v_1]}{[a_{12}+1]} \frac{[u_1-v_2][-a_{12}-3]}{[1][-a_{12}-2]} \frac{[1-u_1+v_3]}{[1]} \nonumber \\
+&\frac{[1+u_2-v_1]}{[1]} \frac{[a_{12}+1-u_2+v_2]}{[a_{12}+1]} \frac{[u_2-v_3][-a_{12}-3]}{[1][-a_{12}-2]} \nonumber \\
\times&\frac{[a_{12}+1-u_1+v_1]}{[a_{12}+1]} \frac{[-a_{12}-2-u_1+v_2]}{[-a_{12}-2]} \frac{[a_{12}+3-u_1+v_3]}{[a_{12}+3]} \nonumber \\
+&\frac{[a_{12}-u_2+v_1]}{[a_{12}]} \frac{[u_2-v_2][-a_{12}-2]}{[1][-a_{12}-1]} \frac{[u_2-v_3][-a_{12}-3]}{[1][-a_{12}-2]} \nonumber \\
\times&\frac{[u_1-v_1][a_{12}]}{[1][a_{12}+1]} \frac{[1+u_1-v_2]}{[1]} \frac{[a_{12}+3-u_1+v_3]}{[a_{12}+3]}.
\end{align}
We remark that the original Tokuyama formula
is an expansion formula for the product of a
one-parameter deformation of the factorization of
the Vandermonde determinant, whereas
a free deformation parameter is absent in the elliptic Vandermonde determinant
$\displaystyle \prod_{1 \le j < k \le N} \frac{[1+u_k-u_j]}{[1]}$
in the partition functions of the elliptic Deguchi-Martin model.

\section{Proof of Theorem \ref{maintheoremstatement}}
We give the proof of
Theorem \ref{maintheoremstatement}
in this section. \\
Let us denote the right hand side of \eqref{maintheorem} as
$G_{M,N}(u_1,\dots,u_N|v_1,\dots,v_M|x_1,\dots,x_N|a_{12})$
\begin{align}
&G_{M,N}(u_1,\dots,u_N|v_1,\dots,v_M|x_1,\dots,x_N|a_{12}) \nonumber \\
:=&
\prod_{1 \le j < k \le N} \frac{[1+u_k-u_j]}{[1]}
E_{M,N}(u_1,\dots,u_N|v_1,\dots,v_M|x_1,\dots,x_N|a_{12})
\nonumber \\
=&\prod_{1 \le j < k \le N} \frac{[1+u_k-u_j]}{[1]}
\prod_{k=1}^N \prod_{x_{k+1}^M-2 \ge j \ge x_k^M}
\frac{[a_{12}+j+N]}{[a_{12}+j+N-k]} \nonumber \\
&\times \sum_{\sigma \in S_N}
\prod_{1 \le j < k \le N} \frac{[1]}{[u_{\sigma(j)}-u_{\sigma(k)}]}
\prod_{j=1}^N \prod_{k=x_j+1}^M \frac{[u_{\sigma(j)}-v_k]}{[1]}
\nonumber \\
&\times
\prod_{j=1}^N \frac{[-u_{\sigma(j)}+v_{x_j}+a_{12}+x_j+N-2]}{[a_{12}+x_j+N-2]}
\prod_{j=1}^N \prod_{k=1}^{x_j-1}
\frac{[1+u_{\sigma(j)}-v_k]}{[1]}.
\end{align}
We prove Theorem \ref{maintheoremstatement} by showing that
$G_{M,N}(u_1,\dots,u_N|v_1,\dots,v_M|x_1,\dots,x_N|a_{12})$
satisfies all the four Properties in
Proposition \ref{ordinarypropertiesfordomainwallboundarypartitionfunction}
obtained from the Izergin-Korepin analysis.

\begin{lemma} \label{lemmaone}
$G_{M,N}(u_1,\dots,u_N|v_1,\dots,v_M|x_1,\dots,x_N|a_{12})$
satisfies Property (1) in Proposition
\ref{ordinarypropertiesfordomainwallboundarypartitionfunction}.
\end{lemma}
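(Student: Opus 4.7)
The plan is to verify, for $G_{M,N}(u_1,\dots,u_N|v_1,\dots,v_M|x_1,\dots,x_N|a_{12})$ with $x_N=M$ and viewed as a function of $y_M=\lambda v_M/(2K_1)$, the two conditions \eqref{propertyuseone} and \eqref{propertyusetwo} defining $\Theta_N(\chi)$ with $\chi(1)=(-1)^N$ and $\chi(\tau)=(-1)^N\exp\bigl(\tfrac{\pi\lambda\tau}{K_2}(\sum_j u_j-a_{12}-N-M+2)\bigr)$, together with holomorphicity; these are exactly the quasi-periodicities already established for $W_{M,N}$ in \eqref{wavefunctionqpfirst} and \eqref{wavefunctionqpsecond}. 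The strategy is entirely direct, bypassing the decomposition \eqref{separation}: I identify the $v_M$-dependent factors in the explicit formula for $G_{M,N}$ and apply \eqref{qpuseone}, \eqref{qpusetwo} term by term.

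First I would isolate the $v_M$-dependence. The two prefactors outside the sum over $\sigma\in S_N$ do not involve $v_M$; nor do any denominators in each summand, nor the factor $\prod_{k=1}^{x_j-1}[1+u_{\sigma(j)}-v_k]/[1]$, since $x_j-1\le M-1$. When $x_N=M$, the only $v_M$-dependent factors in a given summand are the $N-1$ factors $[u_{\sigma(j)}-v_M]/[1]$, $j=1,\dots,N-1$, coming from $k=M$ in $\prod_{j=1}^N\prod_{k=x_j+1}^M$, together with the single factor $[-u_{\sigma(N)}+v_M+a_{12}+M+N-2]/[a_{12}+M+N-2]$, coming from the $j=N$ term of the third product (since $x_N=M$). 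Each summand is thus a product of $N$ entire theta factors in $v_M$, divided by $v_M$-independent quantities, so $G_{M,N}$ is entire in $v_M$, hence in $y_M$.

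Second, I compute the quasi-periodicities using \eqref{qpuseone}, \eqref{qpusetwo}. Under $v_M\to v_M+2K_1/\lambda$ each of the $N$ theta factors picks up a sign $-1$, producing a common factor $(-1)^N$ in each summand; pulling it out of the sum over $\sigma$ yields the analog of \eqref{wavefunctionqpfirst}. Under $v_M\to v_M+2iK_2/\lambda$ the factor $[u_{\sigma(j)}-v_M]$ contributes $-q^{-1}\exp(i\pi\lambda(u_{\sigma(j)}-v_M)/K_1)$ (since the inner argument is shifted by $-2iK_2/\lambda$), and the factor $[-u_{\sigma(N)}+v_M+a_{12}+M+N-2]$ contributes $-q^{-1}\exp(-i\pi\lambda(-u_{\sigma(N)}+v_M+a_{12}+M+N-2)/K_1)$. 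Multiplying all $N$ contributions and using $\sum_{j=1}^N u_{\sigma(j)}=\sum_{j=1}^N u_j$, the total multiplier becomes $(-q^{-1})^N\exp\bigl(-\tfrac{i\pi\lambda}{K_1}(Nv_M-\sum_j u_j+a_{12}+N+M-2)\bigr)$, which is independent of $\sigma$. Factoring it out of the sum reproduces \eqref{wavefunctionqpsecond}, and rewriting in $y_M$ yields the character stated above.

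The main obstacle is the bookkeeping in the second shift: one must correctly handle the two different signs of the $v_M$-coefficient in the inner arguments (the $N-1$ factors carry $-v_M$ while the remaining factor carries $+v_M$), keep track of the resulting signs and powers of $q^{-1}$, and confirm that the cross terms $\sum_{j=1}^N u_{\sigma(j)}$ collapse by permutation invariance so that the multiplier factors out of the $\sigma$-sum. Once this independence from $\sigma$ is established, the match with \eqref{wavefunctionqpsecond} is immediate and the lemma follows.
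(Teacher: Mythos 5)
Your proposal is correct and follows essentially the same route as the paper: isolate the $v_M$-dependent factors $\prod_{j=1}^{N-1}[u_{\sigma(j)}-v_M]$ and $[-u_{\sigma(N)}+v_M+a_{12}+M+N-2]$ in each summand, compute their quasi-periodicities via \eqref{qpuseone} and \eqref{qpusetwo}, observe that the resulting multipliers are independent of $\sigma$ because $\sum_j u_{\sigma(j)}=\sum_j u_j$, and match them with \eqref{wavefunctionqpfirst}--\eqref{wavefunctionqpsecond}. The only (harmless) addition is your explicit remark on holomorphicity, which the paper leaves implicit.
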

\begin{proof}
We first collect all the factors which depend on $v_M$
in \\
$G_{M,N}(u_1,\dots,u_N|v_1,\dots,v_M|x_1,\dots,x_N|a_{12})$ when $x_N=M$.
For each summand, we find the factors are
$\displaystyle \prod_{j=1}^{N-1} [u_{\sigma(j)}-v_M]$ and
$\displaystyle [-u_{\sigma(N)}+v_M+a_{12}+N+M-2]$.
Let us denote the product of the factors as $f_\sigma(v_M)$:
\begin{align}
f_\sigma(v_M)=[-u_{\sigma(N)}+v_M+a_{12}+N+M-2]
\prod_{j=1}^{N-1} [u_{\sigma(j)}-v_M].
\end{align}
Using \eqref{qpuseone} and \eqref{qpusetwo},
we can calculate the quasi-periodicities of
the function $f_\sigma(v_M)$
\begin{align}
f_\sigma(v_M+2K_1/\lambda)
&=(-1)^N f_\sigma(v_M), \label{qpsummandone} \\
f_\sigma(v_M+2 i K_2/\lambda)&
=(-q^{-1})^N
\mathrm{exp} \Bigg(-\frac{i \pi\lambda}{K_1} 
\Bigg(Nv_M-\sum_{j=1}^N u_j+a_{12}+N+M-2 \Bigg)
\Bigg) f_\sigma(v_M). \label{qpsummandtwo}
\end{align}

The quasi-periodicities \eqref{qpsummandone} and \eqref{qpsummandtwo}
do not depend on the permutation $\sigma$,
hence all the summands of
$G_{M,N}(u_1,\dots,u_N|v_1,\dots,v_M|x_1,\dots,x_N|a_{12})$
have the same quasi-periodicities, and we get
the quasi-periodicities for
$G_{M,N}(u_1,\dots,u_N|v_1,\dots,v_M|x_1,\dots,x_N|a_{12})$
\begin{align}
&G_{M,N}(u_1,\dots,u_N|v_1,\dots,v_M+2K_1/\lambda|x_1,\dots,x_N|a_{12})
\nonumber \\
=&(-1)^N G_{M,N}(u_1,\dots,u_N|v_1,\dots,v_M|x_1,\dots,x_N|a_{12}),
\label{symmetricfunctionsqpfirst} \\
&
G_{M,N}(u_1,\dots,u_N|v_1,\dots,v_M+2 i K_2/\lambda|x_1,\dots,x_N|a_{12})
\nonumber \\
=&(-q^{-1})^N
\mathrm{exp} \Bigg(-\frac{i \pi\lambda}{K_1} 
\Bigg(Nv_M-\sum_{j=1}^N u_j+a_{12}+N+M-2 \Bigg)
\Bigg) \nonumber \\
&\times G_{M,N}(u_1,\dots,u_N|v_1,\dots,v_M|x_1,\dots,x_N|a_{12})
. \label{symmetricfunctionsqpsecond}
\end{align}
The quasi-periodicities \eqref{symmetricfunctionsqpfirst}
and \eqref{symmetricfunctionsqpsecond}
for
$G_{M,N}(u_1,\dots,u_N|v_1,\dots,v_M|x_1,\dots,x_N|a_{12})$
are exactly the same with the quasi-periodicities
\eqref{wavefunctionqpfirst} and \eqref{wavefunctionqpsecond}
for the partition functions \\
$W_{M,N}(u_1,\dots,u_N|v_1,\dots,v_M|x_1,\dots,x_N|a_{12})$,
hence Property (1) is proved.

\end{proof}

\begin{lemma} \label{lemmatwo}
$G_{M,N}(u_1,\dots,u_N|v_1,\dots,v_M|x_1,\dots,x_N|a_{12})$
satisfies Property (2) in Proposition
\ref{ordinarypropertiesfordomainwallboundarypartitionfunction}.
\end{lemma}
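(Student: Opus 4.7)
The plan is to exploit the structural form of $G_{M,N}$ to reduce Property (2) to an elementary combinatorial identity about the prefactor $\prod_{1 \le j < k \le N}[1+u_k-u_j]/[1]$. The key observation is that $E_{M,N}(u_1,\dots,u_N|v_1,\dots,v_M|x_1,\dots,x_N|a_{12})$ is itself a symmetric function of $u_1,\dots,u_N$. Indeed, the factor $\prod_{k=1}^N \prod_{x_{k+1}^M-2 \ge j \ge x_k^M}[a_{12}+j+N]/[a_{12}+j+N-k]$ in \eqref{ordinaryrighthandside} is independent of the $u_i$, and the remaining sum runs over all $\sigma \in S_N$ of a fixed expression evaluated at $(u_{\sigma(1)},\dots,u_{\sigma(N)})$. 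Under the relabeling $u_i \mapsto u_{\pi(i)}$ for any $\pi \in S_N$, the substitution $\sigma \mapsto \pi^{-1}\sigma$ in the summation index leaves the sum invariant.

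Granted this symmetry, it suffices to verify Property (2) for adjacent transpositions $\sigma=(j,j+1)$; the general case then follows by iterated application, just as the general relation \eqref{permutationwavefunction} for $W_{M,N}$ was derived from the adjacent-swap version \eqref{basiccommutation}. For the adjacent swap I would track how $\prod_{1 \le l < m \le N}[1+u_m-u_l]$ transforms when $u_j$ and $u_{j+1}$ are exchanged. Factors indexed by pairs disjoint from $\{j,j+1\}$ are unchanged; the pair of factors indexed by $(l,j)$ and $(l,j+1)$ with $l<j$ merely swap, as do those indexed by $(j,m)$ and $(j+1,m)$ with $m>j+1$. The only net change is the single factor $[1+u_{j+1}-u_j]$ (the pair $(j,j+1)$), which becomes $[1+u_j-u_{j+1}]$. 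Combined with the symmetry of $E_{M,N}$, this yields
\begin{equation*}
[1+u_j-u_{j+1}]\, G_{M,N}(u_1,\dots,u_N|\cdots) = [1+u_{j+1}-u_j]\, G_{M,N}(u_1,\dots,u_{j+1},u_j,\dots,u_N|\cdots),
\end{equation*}
which is exactly \eqref{basiccommutation} with $W_{M,N}$ replaced by $G_{M,N}$.

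There is no substantive obstacle here: both the $S_N$-symmetry of $E_{M,N}$ and the factor-by-factor behavior of the prefactor under adjacent swaps are elementary. The only care needed is in the bookkeeping required to confirm that iterating the adjacent-swap identity reproduces the general relation \eqref{permutationwavefunction} with its characteristic product over the inversions of $\sigma$; but this is the same bookkeeping already carried out for $W_{M,N}$ in the proof of Proposition \ref{ordinarypropertiesfordomainwallboundarypartitionfunction}, so it transfers verbatim.
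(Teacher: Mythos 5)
Your proposal is correct and follows essentially the same route as the paper: both rest on the $S_N$-symmetry of $E_{M,N}$ (via reindexing the sum over $\sigma$) together with how the prefactor $\prod_{1\le j<k\le N}[1+u_k-u_j]/[1]$ transforms under permutation of the $u_i$. The only cosmetic difference is that the paper verifies the general-permutation product identity \eqref{forpermutationtwo} in one step, whereas you reduce to adjacent transpositions and iterate exactly as in the derivation of \eqref{permutationwavefunction} from \eqref{basiccommutation}.
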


\begin{proof}
One first notes by its definition
\eqref{ordinaryrighthandside} that
$E_{M,N}(u_1,\dots,u_N|v_1,\dots,v_M|x_1,\dots,x_N|a_{12})$,
which consists
a part of $G_{M,N}(u_1,\dots,u_N|v_1,\dots,v_M|x_1,\dots,x_N|a_{12})$,
is a symmetric function of $u_1,\dots,u_N$ which means
\begin{align}
&E_{M,N}(u_1,\dots,u_N|v_1,\dots,v_M|x_1,\dots,x_N|a_{12})
\nonumber \\
=&E_{M,N}(u_{\sigma(1)},\dots,u_{\sigma(N)}|v_1,\dots,v_M|x_1,\dots,x_N|a_{12}).
\label{forpermutationone}
\end{align}
Next, looking at the other factor 
$\displaystyle \prod_{1 \le j < k \le N} \frac{[1+u_k-u_j]}{[1]}$
constructing the function \\
$G_{M,N}(u_1,\dots,u_N|v_1,\dots,v_M|x_1,\dots,x_N|a_{12})$,
we easily find the following equality
\begin{align}
&\prod_{\substack{1 \le j < k \le N \\ \sigma(j) > \sigma(k)}}
[1+u_{\sigma(k)}-u_{\sigma(j)}]
\prod_{1 \le j < k \le N}
\frac{[1+u_k-u_j]}{[1]}
\nonumber \\
=&
\prod_{\substack{1 \le j < k \le N \\ \sigma(j) > \sigma(k)}}
[1+u_{\sigma(j)}-u_{\sigma(k)}]
\prod_{1 \le j < k \le N}
\frac{[1+u_{\sigma(k)}-u_{\sigma(j)}]}{[1]}.
\label{forpermutationtwo}
\end{align}
Since
$G_{M,N}(u_1,\dots,u_N|v_1,\dots,v_M|x_1,\dots,x_N|a_{12})$
is defined as the product of \\
$\displaystyle \prod_{1 \le j < k \le N}\frac{[1+u_k-u_j]}{[1]}$
and $E_{M,N}(u_1,\dots,u_N|v_1,\dots,v_M|x_1,\dots,x_N|a_{12})$,
one has the following relation as a combination of
\eqref{forpermutationone} and \eqref{forpermutationtwo}
\begin{align}
&\prod_{\substack{1 \le j < k \le N \\ \sigma(j) > \sigma(k)}}
[1+u_{\sigma(k)}-u_{\sigma(j)}]
G_{M,N}(u_1,\dots,u_N|v_1,\dots,v_M|x_1,\dots,x_N|a_{12}) \nonumber \\
=&
\prod_{\substack{1 \le j < k \le N \\ \sigma(j) > \sigma(k)}}
[1+u_{\sigma(j)}-u_{\sigma(k)}]
G_{M,N}(u_{\sigma(1)},\dots,u_{\sigma(N)}|v_1,\dots,v_M|x_1,\dots,x_N|a_{12}).
\end{align}
Thus, the multivariable functions
$G_{M,N}(u_1,\dots,u_N|v_1,\dots,v_M|x_1,\dots,x_N|a_{12})$
and the partition functions
$W_{M,N}(u_1,\dots,u_N|v_1,\dots,v_M|x_1,\dots,x_N|a_{12})$
satisfy the same exchange relation, hence Property (2) is proved.

\end{proof}

\begin{lemma} \label{lemmathree}
$G_{M,N}(u_1,\dots,u_N|v_1,\dots,v_M|x_1,\dots,x_N|a_{12})$
satisfies Property (3) in Proposition
\ref{ordinarypropertiesfordomainwallboundarypartitionfunction}.
\end{lemma}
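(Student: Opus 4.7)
The plan is to verify both parts of Property (3) by direct manipulation of the explicit expression for $G_{M,N}$. The key preliminary observation is that, inside the $\sigma$-sum, the $v_M$-dependence is limited to the factors $[u_{\sigma(j)}-v_M]/[1]$ (which arise as the $k=M$ term of $\prod_{k=x_j+1}^M$) for each $j$ with $x_j \le M-1$, plus, when $x_N=M$, the single factor $[-u_{\sigma(N)}+v_M+a_{12}+M+N-2]/[a_{12}+M+N-2]$. The $M$-dependence of the external elliptic prefactor $\prod_{k=1}^N\prod_{x_{k+1}^M-2 \ge j \ge x_k^M}[a_{12}+j+N]/[a_{12}+j+N-k]$ enters only through its $k=N$ piece, since $x_{N+1}^M=M+1$ appears only there; for $k \le N-1$ the range depends only on $x_1,\dots,x_N$ (using $x_k^{M-1}=x_k^M=x_k$ and, in particular, $x_N^{M-1}=M$ on the small side versus $x_N^M=x_N$ on the large side).

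For the case $x_N \neq M$, every $x_j$ satisfies $x_j \le M-1$, so $v_M$ enters every $j$-factor exactly once. Factoring out $\prod_{j=1}^N [u_{\sigma(j)}-v_M]/[1] = \prod_{j=1}^N[u_j-v_M]/[1]$, which is $\sigma$-independent, the residual $\sigma$-sum and the surviving portion of the external prefactor are precisely the defining expression of $G_{M-1,N}(u_1,\dots,u_N|v_1,\dots,v_{M-1}|x_1,\dots,x_N|a_{12})$, except that the $k=N$ piece of the external prefactor loses its $j=M-1$ factor. That removed factor is $[a_{12}+M+N-1]/[a_{12}+M-1]$, which using $a_{12}=-a_{21}$ together with $[-u]=-[u]$ equals $[a_{21}-M-N+1]/[a_{21}-M+1]$, reproducing exactly the factorization \eqref{ordinaryrecursionwavefunction2}.

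For the case $x_N=M$, I would specialize $v_M=u_N$. Then for every $j=1,\dots,N-1$ the factor $[u_{\sigma(j)}-u_N]$ appears, so any $\sigma$ with $\sigma(j)=N$ for some $j<N$ contributes zero; only the $\sigma$ fixing $N$ survive, and they identify with $\sigma' \in S_{N-1}$. Three cancellations then occur simultaneously: the $j=N$ factor $[-u_N+u_N+a_{12}+M+N-2]/[a_{12}+M+N-2]$ trivializes; the surviving $k=M$ column $\prod_{j=1}^{N-1}[u_{\sigma(j)}-u_N]/[1]$ cancels the $k=N$ column of the anti-Vandermonde $\prod_{j=1}^{N-1}[1]/[u_{\sigma(j)}-u_N]$; and splitting the external Vandermonde as $\prod_{1 \le j<k \le N}[1+u_k-u_j]/[1] = \prod_{j=1}^{N-1}[1+u_N-u_j]/[1] \cdot \prod_{1 \le j<k \le N-1}[1+u_k-u_j]/[1]$ reproduces the factor $\prod_{j=1}^{N-1}[1-u_j+u_N]/[1]$. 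The $j=N$ piece of $\prod_{k=1}^{x_j-1}[1+u_{\sigma(j)}-v_k]/[1]$ supplies $\prod_{k=1}^{M-1}[1+u_N-v_k]/[1]$. What remains has to reorganize to $G_{M-1,N-1}(u_1,\dots,u_{N-1}|v_1,\dots,v_{M-1}|x_1,\dots,x_{N-1}|a_{12}+1)$.

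The main obstacle is the bookkeeping for this last reorganization: one must check that the external elliptic prefactor (restricted to $k=1,\dots,N-1$), the interior denominators $[a_{12}+x_j+N-2]$, and the interior numerators $[-u_{\sigma'(j)}+v_{x_j}+a_{12}+x_j+N-2]$ for $j=1,\dots,N-1$ are all invariant under the simultaneous replacements $(M,N,a_{12}) \to (M-1,N-1,a_{12}+1)$. This rests on the identity $(a_{12}+1)+(N-1)=a_{12}+N$, which absorbs the $a_{12}$-shift into the $N$-shrinkage, together with the range identifications $x_k^{M-1}=x_k^M$ for $k \le N-1$ and $x_N^{M-1}=M$, which make the $k=N$ range of the original prefactor empty (matching the absence of a $k=N$ row in the smaller $G_{M-1,N-1}$). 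Once these combinatorial matches are verified, the $x_N=M$ recursion follows mechanically, while the $x_N \neq M$ case is a short sign calculation.
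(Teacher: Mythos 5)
Your proposal is correct and follows essentially the same route as the paper: for $x_N\neq M$ you factor out the $\sigma$-independent product $\prod_j[u_j-v_M]/[1]$ together with the $j=M-1$, $k=N$ piece of the elliptic prefactor (converted via $a_{12}=-a_{21}$ and $[-u]=-[u]$), and for $x_N=M$ you set $v_M=u_N$, observe that only permutations with $\sigma(N)=N$ survive, and perform the same cancellations and the $(M,N,a_{12})\to(M-1,N-1,a_{12}+1)$ re-identification resting on $(a_{12}+1)+(N-1)=a_{12}+N$ and the emptiness of the $k=N$ range of the prefactor. The remaining bookkeeping you defer is exactly the computation the paper carries out explicitly, and it goes through as you describe.
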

\begin{proof}
We first treat the case $x_N=M$.
We show that the function \\
$G_{M,N}(u_1,\dots,u_N|v_1,\dots,v_M|x_1,\dots,x_N|a_{12})$ satisfies
\eqref{ordinaryrecursionwavefunction}.
First, one notes that the expression of the factor
\begin{align}
\prod_{j=1}^N \prod_{k=x_j+1}^M
\frac{[u_{\sigma(j)}-v_k]}{[1]},
\end{align}
in each summand is redundant when $x_N=M$. It is
essentially
\begin{align}
\prod_{j=1}^{N-1} \prod_{k=x_j+1}^M
\frac{[u_{\sigma(j)}-v_k]}{[1]}.
\label{ordinaryfactorconsideration}
\end{align}
Looking at the part
$\displaystyle
\prod_{j=1}^{N-1} \frac{[u_{\sigma(j)}-v_M]}{[1]}
$ in \eqref{ordinaryfactorconsideration},
one finds this factor vanishes unless $\sigma$ satisfies $\sigma(N)=N$
if one substitutes $v_M=u_N$.

Therefore, only the summands satisfying $\sigma(N)=N$
in \eqref{ordinaryrighthandside} survive
after the substitution $v_M=u_N$.
Note also that the factor
\begin{align}
\prod_{k=1}^N \prod_{x_{k+1}^M-2 \ge j \ge x_k^M}
\frac{[a_{12}+j+N]}{[a_{12}+j+N-k]},
\end{align}
is essentially
\begin{align}
\prod_{k=1}^{N-1} \prod_{x_{k+1}^M-2 \ge j \ge x_k^M}
\frac{[a_{12}+j+N]}{[a_{12}+j+N-k]},
\end{align}
for the case $x_N=M$ since $x_{N+1}^{M}=M+1$
and $x_N^{M}=M$.

From the above considerations, we find that we
can rewrite the multivariable function
$G_{M,N}(u_1,\dots,u_N|v_1,\dots,v_M|x_1,\dots,x_N|a_{12})$
evaluated at $v_M=u_N$
by using the symmetric group $S_{N-1}$
where every $\sigma^\prime \in S_{N-1}$ satisfies
$\{\sigma^\prime(1),\cdots,\sigma^\prime(N-1)\}=\{1,\cdots,N-1 \}$ as follows:
\begin{align}
&G_{M,N}(u_1,\dots,u_N|v_1,\dots,v_M|x_1,\dots,x_N|a_{12})
|_{v_M=u_N} \nonumber \\
=&\prod_{1 \le j < k \le N-1} \frac{[1+u_k-u_j]}{[1]}
\prod_{j=1}^{N-1} \frac{[1+u_N-u_j]}{[1]} 
\prod_{k=1}^{N-1} \prod_{x_{k+1}^M-2 \ge j \ge x_k^M}
\frac{[a_{12}+j+N]}{[a_{12}+j+N-k]}
\nonumber \\
&\times \sum_{\sigma^\prime \in S_{N-1}}
\prod_{1 \le j < k \le N-1}
\frac{[1]}{[u_{\sigma^\prime(j)}-u_{\sigma^\prime(k)}]}
\prod_{j=1}^{N-1} \frac{[1]}{[u_{\sigma^\prime(j)}-u_N]} \nonumber \\
&\times
\prod_{j=1}^{N-1} \prod_{k=x_j+1}^{M-1}
\frac{[u_{\sigma^\prime(j)}-v_k]}{[1]}
\prod_{j=1}^{N-1}
\frac{[u_{\sigma^\prime(j)}-u_N]}{[1]}
\nonumber \\
&\times
\prod_{j=1}^{N-1}
\frac{[-u_{\sigma^\prime(j)}+v_{x_j}+(a_{12}+1)+x_j+(N-1)-2]}
{[(a_{12}+1)+x_j+(N-1)-2]}
\nonumber \\
&\times
\prod_{j=1}^{N-1} \prod_{k=1}^{x_j-1}
\frac{[1+u_{\sigma^\prime(j)}-v_k]}{[1]}
\prod_{k=1}^{M-1}
\frac{[1+u_N-v_k]}{[1]}.
\label{ordinaryrighthandsideaftersubstitution}
\end{align}
Rewriting a factor in \eqref{ordinaryrighthandsideaftersubstitution} as
\begin{align}
\prod_{k=1}^{N-1} \prod_{x_{k+1}^M-2 \ge j \ge x_k^M}
\frac{[a_{12}+j+N]}{[a_{12}+j+N-k]}
=
\prod_{k=1}^{N-1} \prod_{x_{k+1}^{M-1}-2 \ge j \ge x_k^{M-1}}
\frac{[(a_{12}+1)+j+(N-1)]}{[(a_{12}+1)+j+(N-1)-k]},
\end{align}
and cancelling same factors appearing in the denominator and numerator,
one finds that \eqref{ordinaryrighthandsideaftersubstitution}
can be further simplified as
\begin{align}
&G_{M,N}(u_1,\dots,u_N|v_1,\dots,v_M|x_1,\dots,x_N|a_{12})
|_{v_M=u_N} \nonumber \\
=&
\prod_{j=1}^{N-1} \frac{[1+u_N-u_j]}{[1]}
\prod_{k=1}^{M-1}
\frac{[1+u_N-v_k]}{[1]} \nonumber \\
&\times \prod_{1 \le j < k \le N-1} \frac{[1+u_k-u_j]}{[1]}
\prod_{k=1}^{N-1} \prod_{x_{k+1}^{M-1}-2 \ge j \ge x_k^{M-1}}
\frac{[(a_{12}+1)+j+(N-1)]}{[(a_{12}+1)+j+(N-1)-k]}
\nonumber \\
&\times \sum_{\sigma^\prime \in S_{N-1}}
\prod_{1 \le j < k \le N-1}
\frac{[1]}{[u_{\sigma^\prime(j)}-u_{\sigma^\prime(k)}]}
\prod_{j=1}^{N-1} \prod_{k=x_j+1}^{M-1}
\frac{[u_{\sigma^\prime(j)}-v_k]}{[1]}
\nonumber \\
&\times
\prod_{j=1}^{N-1}
\frac{[-u_{\sigma^\prime(j)}+v_{x_j}+(a_{12}+1)+x_j+(N-1)-2]}
{[(a_{12}+1)+x_j+(N-1)-2]}
\prod_{j=1}^{N-1} \prod_{k=1}^{x_j-1}
\frac{[1+u_{\sigma^\prime(j)}-v_k]}{[1]}.
\label{ordinaryrighthandsideaftersubstitutiontwo}
\end{align}
Finally, noting
\begin{align}
& \prod_{1 \le j < k \le N-1} \frac{[1+u_k-u_j]}{[1]}
\prod_{k=1}^{N-1} \prod_{x_{k+1}^{M-1}-2 \ge j \ge x_k^{M-1}}
\frac{[(a_{12}+1)+j+(N-1)]}{[(a_{12}+1)+j+(N-1)-k]}
\nonumber \\
&\times \sum_{\sigma^\prime \in S_{N-1}}
\prod_{1 \le j < k \le N-1}
\frac{[1]}{[u_{\sigma^\prime(j)}-u_{\sigma^\prime(k)}]}
\prod_{j=1}^{N-1} \prod_{k=x_j+1}^{M-1}
\frac{[u_{\sigma^\prime(j)}-v_k]}{[1]}
\nonumber \\
&\times
\prod_{j=1}^{N-1}
\frac{[-u_{\sigma^\prime(j)}+v_{x_j}+(a_{12}+1)+x_j+(N-1)-2]}
{[(a_{12}+1)+x_j+(N-1)-2]}
\prod_{j=1}^{N-1} \prod_{k=1}^{x_j-1}
\frac{[1+u_{\sigma^\prime(j)}-v_k]}{[1]}
\nonumber \\
=&
G_{M-1,N-1}(u_1,\dots,u_{N-1}|v_1,\dots,v_{M-1}|x_1,\dots,x_{N-1}|a_{12}+1)
,
\end{align}
one finds that
\eqref{ordinaryrighthandsideaftersubstitutiontwo} is exactly the
following recursion relation
for the function \\
$G_{M,N}(u_1,\dots,u_N|v_1,\dots,v_M|x_1,\dots,x_N|a_{12})$
\begin{align}
&G_{M,N}(u_1,\dots,u_N|v_1,\dots,v_M|x_1,\dots,x_N|a_{12})
|_{v_M=u_N} \nonumber \\
=&
\prod_{j=1}^{N-1} \frac{[1+u_N-u_j]}{[1]}
\prod_{k=1}^{M-1}
\frac{[1+u_N-v_k]}{[1]} \nonumber \\
&\times
G_{M-1,N-1}(u_1,\dots,u_{N-1}|v_1,\dots,v_{M-1}|x_1,\dots,x_{N-1}|a_{12}+1)
.
\end{align}
Thus we have shown that the functions
$G_{M,N}(u_1,\dots,u_N|v_1,\dots,v_M|x_1,\dots,x_N|a_{12})$
satisfy the same relation \eqref{ordinaryrecursionwavefunction}
with the partition functions
$W_{M,N}(u_1,\dots,u_N|v_1,\dots,v_M|x_1,\dots,x_N|a_{12})$,
hence Property (3) for the case $x_N=M$ is shown.

Next, we examine the case $x_N \neq M$.
This can be shown in a much simpler way.
We rewrite $G_{M,N}(u_1,\dots,u_N|v_1,\dots,v_M|x_1,\dots,x_N|a_{12})$ as
\begin{align}
&G_{M,N}(u_1,\dots,u_N|v_1,\dots,v_M|x_1,\dots,x_N|a_{12}) \nonumber \\
=&\prod_{1 \le j < k \le N} \frac{[1+u_k-u_j]}{[1]}
\frac{[a_{12}+M-1+N]}{[a_{12}+M-1]}
\prod_{k=1}^{N} \prod_{x_{k+1}^{M-1}-2 \ge j \ge x_k^{M-1}}
\frac{[a_{12}+j+N]}{[a_{12}+j+N-k]}
\nonumber \\
&\times \sum_{\sigma \in S_N}
\prod_{1 \le j < k \le N} \frac{[1]}{[u_{\sigma(j)}-u_{\sigma(k)}]}
\prod_{j=1}^N \prod_{k=x_j+1}^{M-1}
\frac{[u_{\sigma(j)}-v_k]}{[1]}
\prod_{j=1}^N \frac{[u_{\sigma(j)}-v_M]}{[1]}
\nonumber \\
&\times
\prod_{j=1}^N \frac{[-u_{\sigma(j)}+v_{x_j}+a_{12}+x_j-N-2]}{[a_{12}+x_j+N-2]}
\prod_{j=1}^N \prod_{k=1}^{x_j-1}
\frac{[1+u_{\sigma(j)}-v_k]}{[1]}.
\label{originalrighthandsidenew}
\end{align}
We next use the obvious identity
\begin{align}
\displaystyle 
\prod_{j=1}^N \frac{[u_{\sigma(j)}-v_M]}{[1]}
=\prod_{j=1}^N \frac{[u_{j}-v_M]}{[1]},
\end{align}
to get this factor out of the sum in
\eqref{originalrighthandsidenew}, and we find
\begin{align}
&G_{M,N}(u_1,\dots,u_N|v_1,\dots,v_M|x_1,\dots,x_N|a_{12}) \nonumber \\
=&\frac{[a_{21}-M-N+1]}{[a_{21}-M+1]}
\prod_{j=1}^N \frac{[u_{j}-v_M]}{[1]}
\nonumber \\
&\times \prod_{1 \le j < k \le N} \frac{[1+u_k-u_j]}{[1]} 
\prod_{k=1}^{N} \prod_{x_{k+1}^{M-1}-2 \ge j \ge x_k^{M-1}}
\frac{[a_{12}+j+N]}{[a_{12}+j+N-k]}
\nonumber \\
&\times \sum_{\sigma \in S_N}
\prod_{1 \le j < k \le N} \frac{[1]}{[u_{\sigma(j)}-u_{\sigma(k)}]}
\prod_{j=1}^N \prod_{k=x_j+1}^{M-1}
\frac{[u_{\sigma(j)}-v_k]}{[1]}
\nonumber \\
&\times
\prod_{j=1}^N \frac{[-u_{\sigma(j)}+v_{x_j}+a_{12}+x_j-N-2]}{[a_{12}+x_j+N-2]}
\prod_{j=1}^N \prod_{k=1}^{x_j-1}
\frac{[1+u_{\sigma(j)}-v_k]}{[1]} \nonumber \\
=&\frac{[a_{21}-M-N+1]}{[a_{21}-M+1]}
\prod_{j=1}^N \frac{[u_{j}-v_M]}{[1]}
G_{M-1,N}(u_1,\dots,u_N|v_1,\dots,v_{M-1}|x_1,\dots,x_N|a_{12}).
\label{samefactorization}
\end{align}
Note here that we have also used $a_{12}=-a_{21}$ and $[-u]=-[u]$.
\eqref{samefactorization} is the exactly the same recursion relation
for the partition functions
$W_{M,N}(u_1,\dots,u_N|v_1,\dots,v_M|x_1,\dots,x_N|a_{12})$
must satisfy for the case $x_N \neq M$.

\end{proof}

\begin{lemma} \label{lemmafour}
$G_{M,N}(u_1,\dots,u_N|v_1,\dots,v_M|x_1,\dots,x_N|a_{12})$
satisfies Property (4) in Proposition
\ref{ordinarypropertiesfordomainwallboundarypartitionfunction}.
\end{lemma}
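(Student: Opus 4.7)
The plan is to verify Property (4) by a direct substitution of $N=1$ and $x_1=M$ into the definition of $G_{M,N}$, observing that virtually every factor becomes an empty product. Since the symmetric group $S_1$ is trivial, the sum over $\sigma$ in \eqref{ordinaryrighthandside} reduces to a single term with $\sigma = \mathrm{id}$, and every double product indexed by $1 \le j < k \le N$ is vacuous; in particular, both the elliptic Vandermonde prefactor $\prod_{1 \le j < k \le N}\frac{[1+u_k-u_j]}{[1]}$ and the denominator $\prod_{1 \le j < k \le N}\frac{[1]}{[u_{\sigma(j)}-u_{\sigma(k)}]}$ equal $1$.

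Next I would check the $a_{12}$-dependent prefactor $\prod_{k=1}^{N}\prod_{x_{k+1}^M-2 \ge j \ge x_k^M}\frac{[a_{12}+j+N]}{[a_{12}+j+N-k]}$. With the convention $x_2^M = M+1$ and $x_1^M = x_1 = M$, the inner range $\{j : M \le j \le M-1\}$ is empty, so this factor equals $1$. The product $\prod_{k=x_1+1}^M \frac{[u-v_k]}{[1]}$ likewise runs over the empty range $k = M+1,\dots,M$ and is $1$. The only surviving contributions are $\frac{[-u+v_M+a_{12}+M-1]}{[a_{12}+M-1]}$, coming from the $j=1$ term with $x_1 = M$ and $N=1$, together with $\prod_{k=1}^{x_1-1}\frac{[1+u-v_k]}{[1]} = \prod_{k=1}^{M-1}\frac{[1+u-v_k]}{[1]}$. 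Multiplying these produces exactly the right-hand side of \eqref{ordinaryinitialrecursion}, which establishes Property (4) for $G_{M,N}$.

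There is no real obstacle here; the entire argument is bookkeeping with empty products. The only subtle point I would double-check carefully is the boundary convention $x_{N+1}^M = M+1$, which is precisely designed to make the outermost $j$-range in the $a_{12}$-prefactor vacuous in the limiting case $x_N=M$, ensuring that no spurious or ill-defined factors arise when one shrinks down to the base case $N=1$. Once this convention is handled correctly, the identification with $W_{M,1}(u\,|\,v_1,\dots,v_M\,|\,M\,|\,a_{12})$ is immediate and termwise.
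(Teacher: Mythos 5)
Your verification is correct and matches the paper's approach exactly: the paper simply states that Property (4) is "trivial to check from the definition" of $G_{M,N}$, and your careful bookkeeping of the empty products (the vacuous Vandermonde factors, the empty $j$-range in the $a_{12}$-prefactor due to $x_{2}^M=M+1$ and $x_1^M=M$, and the empty product $\prod_{k=M+1}^{M}$) is precisely the computation the paper leaves to the reader. The surviving factors $\frac{[-u+v_M+a_{12}+M-1]}{[a_{12}+M-1]}\prod_{k=1}^{M-1}\frac{[1+u-v_k]}{[1]}$ indeed reproduce the right-hand side of \eqref{ordinaryinitialrecursion}.
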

\begin{proof}
It is trivial to check from the definition
of $G_{M,N}(u_1,\dots,u_N|v_1,\dots,v_M|x_1,\dots,x_N|a_{12})$
\eqref{ordinaryrighthandside}.
\end{proof}
{\it Proof of Theorem \ref{maintheoremstatement}.}
From Lemmas \ref{lemmaone}, \ref{lemmatwo}, \ref{lemmathree}
and \ref{lemmafour}, we find the elliptic function
$G_{M,N}(u_1,\dots,u_N|v_1,\dots,v_M|x_1,\dots,x_N|a_{12})$
satisfies all the
Properties (1), (2), (3) and (4) in
Proposition \ref{ordinarypropertiesfordomainwallboundarypartitionfunction},
which means that it is nothing but the explicit form of the partition functions \\
$
W_{M,N}(u_1,\dots,u_N|v_1,\dots,v_M|x_1,\dots,x_N|a_{12})
=
G_{M,N}(u_1,\dots,u_N|v_1,\dots,v_M|x_1,\dots,x_N|a_{12})
$.
\\
\hfill $\Box$

\section{Conclusion}
In this paper, we introduced and investigated the
partition functions of the
Deguchi-Martin model by using the Izergin-Korepin analysis.
We viewed the partition functions as an elliptic polynomial
and determined the properties the partition functions.
We next proved that the partition functions are expressed
as a product of
elliptic factors and elliptic Schur-type symmetric functions.
This result resembles the ones for the trigonometric model
whose partition functions are proved by Bump-Brubaker-Friedberg in \cite{BBF}
and by Bump-McNamara-Nakasuji in \cite{BMN}
to be given as the product of a one-parameter deformation
of the Vandermonde determinant and the Schur functions
and factorial Schur functions, respectively.
The result obtained in this paper
can be viewed as an elliptic analogue of their results.
The result obtained in this paper can be used to
construct new algebraic identities for elliptic multivariable functions
which will be reported elsewhere.
To explore the connections with number theory
seems to be an interesting topic.

It is also interesting to extend the study of partition functions
of the elliptic Deguchi-Martin model to other boundary conditions.
For example, if one changes the boundary condition of
the free-fermionic model to the reflecting boundary condition,
other types of symmetric functions
such as the symplectic Schur functions
\cite{Iv,BBCG,dualsymplectic} appear.
Therefore, one can expect that elliptic analogues
of the symplectic Schur functions appear by generalizing
the models from trigonometric to elliptic ones.

\section*{Acknowledgements}
The author thanks Prof. Korepin for discussions about
the Izergin-Korepin method and warm encouragement.
The author also expresses his sincere gratitude to the referee
for numerous valuable suggestions to improve the manuscript.
This work was partially supported by grant-in-Aid
for Scientific Research (C) No. 16K05468.

\end{document}